\documentclass[11pt,a4paper]{article}

\usepackage[T1]{fontenc}
\usepackage[utf8]{inputenc}
\usepackage{lmodern}
\usepackage{natbib}

\usepackage{amsmath}
\usepackage{amssymb}
\usepackage{amsthm}
\usepackage{bm}

\usepackage{graphicx}
\usepackage{booktabs}
\usepackage{array}
\usepackage{multirow}
\usepackage{float}
\usepackage{pdflscape}

\usepackage{enumitem}

\usepackage[
colorlinks=true,
linkcolor=blue,
citecolor=red,
urlcolor=magenta
]{hyperref}

\usepackage[nameinlink,noabbrev]{cleveref}

\usepackage[a4paper,
left=3cm,
right=3cm,
top=3cm,
bottom=3cm]{geometry}

\usepackage{setspace}
\usepackage{microtype}

\theoremstyle{plain}

\newtheorem{theorem}{Theorem}[section]
\newtheorem{lemma}[theorem]{Lemma}

\newtheorem{corollary}[theorem]{Corollary}

\theoremstyle{definition}

\newtheorem{definition}[theorem]{Definition}

\theoremstyle{remark}

\newtheorem{remark}[theorem]{Remark}

\DeclareMathOperator{\MAD}{MAD}
\DeclareMathOperator{\MS}{MS}
\DeclareMathOperator{\IS}{IS}
\DeclareMathOperator{\AD}{AD}

\title{\textbf{An Interval--Score ROC Curve \\for Assessment, Calibration and Ensembling \\ of Probabilistic Forecasts}}

\author{
Simone Milanesi\thanks{Department of Electrical, Computer and Biomedical Engineering, University of Pavia, Pavia, Italy}
\and
Marco Capelletti$^*$
\and
Flavio Bobba$^*$
\and
Giuseppe De Nicolao$^*$
}

\date{}

\begin{document}

\maketitle

\begin{abstract}
Probabilistic forecast evaluation is inherently multi-objective, yet existing proper scoring rules reduce predictive performance to a single scalar value, potentially obscuring the trade-off between forecast concentration and predictive accuracy. We introduce the Interval-Score Receiver Operating Characteristic (IS–-ROC) Curve, a graphical framework that represents the complete family of interval forecasts generated by varying prediction tightness. We show that the IS–-ROC Curve induced by the data generating process is Pareto optimal and convex, providing a geometric characterization of the optimal forecasting frontier. Building on these properties, we propose a geometry-based calibration procedure based on tangent optimization and convexification, together with an ensemble strategy that combines competing forecasters through convex hull construction. Finally, we provide a practical workflow and numerical examples illustrating forecast comparison, calibration, and ensemble construction within the proposed framework.

\end{abstract}

\tableofcontents

\bigskip



\section{Introduction}
\label{sec:introduction}

\subsection{Probabilistic prediction in the forecasting landscape}
\label{subsec:probabilistic_forecasting}

Forecasting was historically developed as a point prediction problem, since the objective was to estimate a single future value rather than a range of plausible values together with their associated probabilities. During the last decades, however, the focus has progressively shifted toward probabilistic forecasting. In many application domains, including epidemiology, energy systems, and finance, explicit quantification of predictive uncertainty has become the de facto standard \cite{bracher2021evaluating,zhang2014review,adrian2019vulnerable}.

Among the available approaches, probabilistic forecasting represents only one possible framework for uncertainty quantification. As discussed by Zhang et al.~\cite{zhang2014review}, point forecasts can be extended in several directions, including risk-index forecasts, which provide a static extension of point predictions, and space--time scenario forecasts, which focus on stochastic trajectories.
Although these alternative approaches offer important advantages, such as reduced computational complexity or explicit modeling of space--time dependence, they do not fully characterize the underlying uncertainty. Probabilistic forecasting is conceptually more ambitious, as it aims to approximate the entire conditional distribution of the data generating process.

As emphasized by Dawid and later stressed by Gneiting et al.~\cite{dawid1984present,gneiting2007probabilistic},
``Forecasts should be probabilistic in nature''.
Within this framework, nature generates an observation
\[
Y \sim G,
\]
where \(G\) denotes the unknown distribution of the data generating process (DGP). Forecasters have access only to partial information, that is past values of the observations and possibily of (some) covariates. Based on these, they produce a predictive distribution $F$, e.g., through quantile regression.
A forecaster that has full knowledge of the DGP $G$ is called \textit{oracle}.

The natural question then becomes:

\begin{center}
\emph{How should different probabilistic forecasts be compared?}
\end{center}


\subsection{Evaluation of point forecasters}
\label{subsec:point_forecasting}

For point forecasting, a large number of well-established evaluation tools are available.

Classical scalar error measures include

\begin{itemize}
    \item Mean Absolute Error (MAE),
    \item Mean Squared Error (MSE),
    \item Mean Absolute Percentage Error (MAPE).
\end{itemize}


Graphical tools also provide immediate visual comparisons between competing forecasters, for example through observed-versus-predicted scatter plots.
Consequently, point forecasting benefits from both scalar performance measures and intuitive graphical diagnostics.


\subsection{Current evaluation tools for probabilistic forecasting}
\label{subsec:limitations}

Several metrics have been proposed for evaluating probabilistic forecasts. Each of the currently available approaches summarizes the predictive distribution into a comparatively low-dimensional diagnostic.
A metric is said to be \textit{proper} if no model can outperform the \emph{oracle}, i.e. the predictor based on the distribution $G$ of the DGP.

Among most widely used metrics are:

\begin{itemize}

\item \textbf{Probability Integral Transform (PIT) histograms.} Flatness of the PIT histogram provides a necessary but not sufficient condition for calibration
\cite{gneiting2007probabilistic}: calibration alone cannot distinguish between genuinely informative forecasts and calibrated but underperforming predictive distributions.

\item \textbf{Interval Score (IS).} The Interval Score is a proper scoring rule for interval forecasts and evaluates predictive performance at a nominal coverage level \(1-\alpha\) \cite{gneiting2007strict}. Although extremely useful for local assessment, it provides only a partial description of predictive performance across the entire predictive distribution.

\item \textbf{Weighted Interval Score (WIS) and Continuous Ranked Probability Score (CRPS).} Both WIS and CRPS are strictly proper scoring rules and represent state-of-the-art tools for ranking probabilistic forecasts
\cite{gneiting2007strict,tibshirani2023forecast}.
Their main limitation is that they reduce the entire predictive distribution to a single scalar value. While this aggregation is highly effective for ranking competing models, it inevitably compresses diagnostic information that could otherwise reveal where and why a forecast fails (e.g., in the distribution tails rather than around its center).

\end{itemize}


\subsection{The IS--ROC curve: beyond scalar evaluation}
\label{subsec:isroc}
The comparison of probabilistic forecasting models is intrinsically a multidimensional problem. Indeed, predictive performance cannot be fully characterized by a single numerical score, since different models may exhibit different trade-offs between forecast concentration and predictive accuracy.
Traditional scoring rules impose a total ordering by collapsing these competing aspects into a scalar quantity. While convenient, such a reduction may conceal meaningful differences between competing forecasters.

As a natural extension of the Interval Score, in this work we propose a two-dimensional graphical representation of the trade-off between interval concentration and predictive error.
%
%
Models are hence compared through the set of performance points generated by varying the prediction concentration, and the most informative models are those lying on the Pareto frontier.

Within this framework, a tunable probabilistic forecasting model is viewed as a family of interval forecasters parametrized by a tightness parameter that controls the width of the interval forecast. Each value of tightness identifies a different operating point, while the collection of the associated performances summarizes the overall predictive capability of the model without requiring the user to specify \emph{a priori} a preferred trade-off between interval width and predictive accuracy.

This viewpoint is strongly inspired by the Receiver Operating Characteristic (ROC) curve used in binary classification
\cite{fawcett2006introduction,gneiting2022receiver}.
The ROC curve was originally introduced to overcome the limitations of an assessment relying on a single classification accuracy figure. Instead, the ROC curve considers an entire family of classifiers obtained by varying a decision threshold, thereby providing a much richer description of classifier performance in terms of achievable tradeoffs between true and false positive rates.

In this paper, we introduce the Interval--Score Receiver Operating Characteristic (IS--ROC) Curve, that follows the same philosophy. Instead of evaluating an interval forecast at a single tightness, it represents the entire family of interval forecasts obtained by varying the tightness.

The resulting framework enables the extension to probabilistic forecasting of several fundamental concepts from classification ROC analysis, including graphical dominance, Pareto optimality, calibration, and ensembling via convexification.




\subsection{Comparison with related work}
\label{subsec:related}

Several contributions have recognized that evaluating probabilistic forecasts requires more than a single scalar score.
One of the pioneering works is that of Christoffersen
\cite{christoffersen1998evaluating}, who introduced a likelihood-based framework for assessing interval forecasts.

Given a prediction interval for time $t$
\[
[L_t(p),U_t(p)],
\]
with nominal coverage probability \(p\), Christoffersen proposed representing forecast performance through the binary indicator sequence

\[
I_t=
\begin{cases}
0, & Y_t\in[L_t(p),U_t(p)],\\
1, & \text{otherwise}.
\end{cases}
\]
The central idea is to determine whether this sequence behaves as expected under an ideal forecaster. More precisely, \textit{conditional calibration} is decomposed into two complementary properties: \textit{unconditional calibration} (the empirical violation frequency should equal the nominal violation probability) and \textit{independence} (forecast violations should occur independently over time, rather than exhibiting temporal clustering).
Although mathematically rigorous, this framework inevitably compresses the information contained in the predictive distribution into a binary sequence, ignores the magnitude of forecast errors and evaluates only one coverage level at a time.

A different perspective is proposed by Askanazi et al.
\cite{askanazi2018comparison}, who compare interval forecasts having fixed coverage, allowing asymmetric prediction intervals.

Our framework addresses a complementary problem. Rather than varying the allocation of probability mass between the lower and upper tails for a fixed coverage, we study the entire family of central prediction intervals across all coverage levels, thereby providing a global geometric description of the trade-off between interval width and predictive error.

Another important contribution is the calibration paradigm proposed by Gneiting et al.
\cite{gneiting2007probabilistic}, summarized by the principle \char96\char96maximize sharpness subject to calibration''.
Our approach is fully compatible with this philosophy but reverses its operational perspective.
Instead of imposing calibration before model comparison, we first compare IS--ROC curves through their geometry and only afterwards apply a calibration procedure based on the geometry of the selected curve. Calibration therefore becomes a corrective transformation that is always available rather than a prerequisite preventing comparison.

\bigskip

The aggregation of predictive distributions has also received considerable attention.
Gneiting and Ranjan
\cite{gneiting2013combining}
showed that linear pooling of calibrated predictive distributions generally produces overdispersed and therefore uncalibrated forecasts. To overcome this limitation, they proposed nonlinear aggregation procedures such as the Spread-Adjusted Linear Pool (SLP) and the Beta-Transformed Linear Pool (BLP).

Our framework adopts a substantially different viewpoint. 
Instead of combining predictive distributions globally, convexification operates locally in the IS--ROC plane. 





\section{The IS--ROC Curve}
\label{sec:isroc}

\subsection{Problem framework}
\label{subsec:problem}

The theoretical framework adopted throughout this work follows that proposed by
Gneiting et al.~\cite{gneiting2007probabilistic}. 
For each time instant $t=1,2,\ldots,n$, Nature generates the observation
\[
Y_t \sim G_t(\cdot),
\]
where  $G_t(\cdot)$
denotes the (unknown) probability distribution governing the DGP.
A probabilistic forecasting model provides an estimate of this distribution, denoted by
\[
F_t(y\mid x_t)
=
\Pr(Y_t\le y\mid X_t=x_t).
\]
where \(X_t\) represents a random vector of covariates that is available to the forecaster.

For example, in wind power forecasting, the response variable \(Y_t\) may represent the generated electrical power, whereas the covariates $X_t$ may include meteorological forecasts and sensor data as well as day-of-year and time-of-day variables.

The objective of probabilistic forecasting is therefore to estimate a conditional distribution that is as close as possible to $G_t$, the distribution of the true data generating process. 
When no covariates are available, this problem reduces to estimating the marginal distribution of \(Y\), which can often be visualized through classical descriptive tools such as histograms or kernel density estimates.

The presence of covariates fundamentally changes the problem.
Rather than estimating a single probability distribution, i.e. a function of a single variable $y$, one must estimate a conditional distribution, i.e. a function of $y$ and  several covariates. Consequently, traditional graphical diagnostics may become inadequate.

Throughout this paper, we do not address the problem of learning the predictive distribution from a training set, but rather we assume that one or more probabilistic forecasting models capable of producing such conditional distributions are available and we address the problem of assessing their performance, improve their calibration and build ensembles, exploiting the availability of the dataset $\{(x_i,y_i)\}_{i=1}^{n}$.


\subsection{Mathematical definition of the IS--ROC Curve}
\label{subsec:definition}

We begin by introducing the notion of \textit{Tunable Interval Predictor}.

\begin{definition}[Tunable Interval Predictor]
Let \(x\) be a vector taking values in a measurable subset 
\( \mathcal{X} \subseteq \mathbb{R}^d\).
A \emph{Tunable Interval Predictor} (TIP)  is a mapping
\[
P :
\mathcal{X}\times [0,1]
\longrightarrow
\{\text{real intervals}\},
\]
which, for a given $x\in\mathcal{X}$ and tightness parameter
\(\beta\in[0,1]\),
yields predictive intervals

\[
P(\beta,x)
=
\left[
l_\beta(x),
u_\beta(x)
\right].
\]
satisfying the nesting and merging properties:
\begin{itemize}
\item 
\(
\beta_1<\beta_2
\Longrightarrow
l_{\beta_1}(x)
\le
l_{\beta_2}(x)
\le
u_{\beta_2}(x)
\le
u_{\beta_1}(x) \hspace{5pt}\forall x
\) (nesting)
\item
$\mu(x)
=
l_1(x)
=
u_1(x)
$ (merging for $\beta =1$)
\end{itemize}
\end{definition}
Above, $\mu(x)$
belongs to all prediction intervals, and it is also the point forecast yielded when tightness achieves its maximum value $\beta=1$.
For simplicity, we shall write $P_\beta(x)
:=P(\beta,x).$

The notion of TIP is contrasted with the class of static interval predictors, which provide interval predictions but do not have any tuning knob that controls the tradeoff between tightness and accuracy.

\begin{remark}
    We will deal with the covariate vector $x$ in a probabilistic framework. In particular, we refer to $x\in\mathcal{X}$ as a \textit{realization} of the random vector $X$.
\end{remark}
\begin{remark}

Every predictive distribution $F_t(y \mid x)$ naturally induces a TIP through its quantile function.
Indeed, if $q_\tau(x)
=
F^{-1}(\tau\mid x_t),
$ then the corresponding central prediction interval is given by

\[
l_\beta(x)
=
q_{\beta/2}(x),
\qquad
u_\beta(x)
=
q_{1-\beta/2}(x).
\]
Conversely, a probabilistic forecaster can equivalently be represented by the family of all its central prediction intervals.
\end{remark}

\bigskip

Associated with any interval TIP $P$ there are two fundamental quantities.
The first is its \emph{sharpness}, defined as the interval width
\[
\mathrm{S}^P(\beta,x)
=
u_\beta(x)-l_\beta(x).
\]
Averaging over the covariate distribution gives the mean sharpness
\[
\MS^{P}(\beta)
=
\mathbb{E}_X
\left[
S^P(\beta, X)
\right],
\]
which, in practice, is estimated by the sample mean
\[
\MS^{P}(\beta)
\approx
\frac1t
\sum_{i=1}^{t}
S^{P}(\beta,x_i).
\]
Sharper forecasts correspond to smaller interval widths and therefore convey more concentrated predictive information.

However, intervals that are excessively narrow inevitably increase the probability that future observations fall outside the predicted range.
To quantify this phenomenon, we introduce the \textit{Absolute Distance (AD)}:
\[
\AD^{P}(\beta,x,y)
=
\left(y-u_\beta(x)\right)_+
+
\left(l_\beta(x)-y\right)_+,
\]
where $(a)_+ = \max\{a,0\}$.
Thus, the Absolute Distance is equal to zero whenever the observation belongs to the prediction interval and increases linearly with the distance from the closest interval endpoint otherwise.
The corresponding average is the expected absolute interval distance.
\[
\MAD^{P}(\beta)
=
\mathbb{E}_{X,Y}
\left[
\AD^{P}(\beta,X,Y)
\right],
\]
which can be estimated through the sample mean
\[
\MAD^{P}(\beta)
\approx
\frac1t
\sum_{i=1}^{t}
\AD^{P}(\beta,x_i,y_i).
\]

A reduction in sharpness (i.e., narrower prediction intervals) generally increases the expected distance from future observations, whereas wider intervals reduce the prediction error at the expense of informativeness.
Consequently, these two objectives cannot usually be optimized simultaneously, giving rise to an intrinsic trade-off between concentration and predictive accuracy.

For a \emph{covariate-free} TIP, i.e. a predictor that does not uses covariates, the sharpness is deterministic, whereas the expectation defining \(\MAD^P(\beta)\) is taken only with respect to the random variable \(Y\).

The central idea of the proposed framework is to avoid collapsing this trade-off into a single scalar score.
For every value of the tightness parameter \(\beta\), the forecaster generates the performance point
\[
\left(
\MS^{P}(\beta),
\MAD^{P}(\beta)
\right)
\in\mathbb{R}^2.
\]
Allowing \(\beta\) to vary over the interval \((0,1]\) the set of all performance points draws a parametric curve, displaying all the tradeoffs achievable by the PIT.

\begin{definition}[IS--ROC Curve]
The \emph{Interval Score Receiver Operating Characteristic (IS--ROC) Curve} associated with the interval forecaster \(P_{\beta}\) is the parametric curve

\[
\beta
\longmapsto
\left(
\MS^{P}(\beta),
\MAD^{P}(\beta)
\right),
\qquad
\beta\in(0,1].
\]

\end{definition}

The IS--ROC Curve summarizes the complete operational behavior of a PIT across all prediction interval widths.



\subsubsection*{Relationship with the Interval Score}

In the proposed framework, the classical Interval Score admits a very meaningful geometric interpretation.
For an interval forecaster \(P_{\beta}\) and a given nominal miscoverage level \(\alpha\), the Interval Score can be written as

\[
\IS^{P}(\alpha,\beta)
=
\MS^{P}(\beta)
+
\frac{2}{\alpha}
\MAD^{P}(\beta).
\]
Hence, fixing \(\alpha\) defines a family of level sets
\[
\IS
=
\MS
+
\frac{2}{\alpha}\MAD
\]
or equivalently,
\[
\MAD
=
-\frac{\alpha}{2}\MS
+
\frac{\alpha}{2}\IS,
\]
where \(\frac{\alpha}{2}\IS\) is the intercept. Therefore, all points having the same Interval Score lie on parallel straight lines with slope $-\frac{\alpha}{2}$.
Under suitable regularity assumptions (for example, if the IS--ROC Curve is convex), the optimum is attained at the tangency point between the curve and a suitable level line.

Consequently, the parameter \(\alpha\) can be interpreted as the desired tradeoff between  mean sharpness and mean absolute error which, by the tangent method, leads to the selection of the optimal  tightness $\beta$.
This geometric interpretation plays a central role in the calibration procedure developed in \cref{sec:calibration}.

\subsection{Theoretical properties of the IS--ROC Curve}
\label{subsec:properties}

The IS--ROC Curve inherits a fundamental property of the Interval Score, namely its ability to associate to the oracle an optimal frontier in the $(\mathrm{S}, \mathrm{MAD})$ plane. 


\subsubsection{Pareto optimality of the oracle}

\begin{definition}[Pointwise dominance]

Let $P^{1}$ and $P^{2}$ be two TIPs.

For fixed values $\beta_{1},\beta_{2}\in(0,1]$, we say that
$P^{1}_{\beta_{1}}$
(non-strictly) dominates
$P^{2}_{\beta_{2}}$
whenever

\[
\MS^{P^{1}}(\beta_{1})
\le
\MS^{P^{2}}(\beta_{2})
\]
\[
\MAD^{P^{1}}(\beta_{1})
\le
\MAD^{P^{2}}(\beta_{2}).
\]

\end{definition}

\begin{definition}[Global dominance]

Let $P^{1}$ and $P^{2}$ be two TIPs.

We say that $P^{1}$ globally dominates $P^{2}$ if, for every
$\beta_{2}\in(0,1]$, there exists
$\beta_{1}\in(0,1]$
such that
$P^{1}_{\beta_{1}}$
pointwise dominates
$P^{2}_{\beta_{2}}$.

\end{definition}

It is immediate to see that, in the $(\MS,\MAD)$ plane, global dominance implies that the IS-ROC Curves do not cross each other, as specified by the following corollary.
\begin{corollary}
If $P^{1}$ globally dominates $P^{2}$, then if $\MS^{P^{1}}(\beta_{1})=\MS^{P^{2}}(\beta_{2})$, then $\MAD^{P^{2}}(\beta_{2}) \ge \MAD^{P^{1}}(\beta_{1})$.
\end{corollary}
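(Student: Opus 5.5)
The plan is to combine the definition of global dominance with two monotonicity properties of the IS--ROC Curve of a single TIP, both of which follow from the nesting property. The corollary then follows by a short case split on where the dominating parameter lies relative to $\beta_1$. Throughout I assume $\MS^{P}$ and $\MAD^{P}$ are finite, as is implicit in the definition of the curve.

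\emph{Step 1 (monotonicity along a single curve).} Let $\beta<\beta'$. Nesting gives $l_\beta(x)\le l_{\beta'}(x)\le u_{\beta'}(x)\le u_\beta(x)$ for every $x$. Hence $S^P(\beta,x)\ge S^P(\beta',x)$ pointwise. Also, for every $y$, $(y-u_{\beta'}(x))_+\ge (y-u_\beta(x))_+$ and $(l_{\beta'}(x)-y)_+\ge(l_\beta(x)-y)_+$, so $\AD^P(\beta,x,y)\le \AD^P(\beta',x,y)$. Taking expectations shows that $\MS^P$ is nonincreasing and $\MAD^P$ is nondecreasing in $\beta$.

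\emph{Step 2 (invoke global dominance).} Fix $\beta_1,\beta_2$ with $\MS^{P^1}(\beta_1)=\MS^{P^2}(\beta_2)$. Global dominance provides some $\beta_1'$ with $\MS^{P^1}(\beta_1')\le \MS^{P^2}(\beta_2)=\MS^{P^1}(\beta_1)$ and $\MAD^{P^1}(\beta_1')\le\MAD^{P^2}(\beta_2)$. It therefore suffices to show $\MAD^{P^1}(\beta_1)\le \MAD^{P^1}(\beta_1')$.

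\emph{Step 3 (case split).}
\begin{itemize}
\item If $\beta_1'\ge\beta_1$, Step 1 gives $\MAD^{P^1}(\beta_1)\le\MAD^{P^1}(\beta_1')$ directly.
\item If $\beta_1'<\beta_1$, Step 1 gives $\MS^{P^1}(\beta_1')\ge \MS^{P^1}(\beta_1)$. Combined with Step 2 this forces $\MS^{P^1}(\beta_1')=\MS^{P^1}(\beta_1)$. The pointwise difference $S^{P^1}(\beta_1',X)-S^{P^1}(\beta_1,X)$ is nonnegative with zero mean, so it vanishes almost surely. Nested intervals of equal length coincide, so $P^1_{\beta_1'}(X)=P^1_{\beta_1}(X)$ a.s. Hence the absolute distances agree a.s. and $\MAD^{P^1}(\beta_1')=\MAD^{P^1}(\beta_1)$.
\end{itemize}
In both cases $\MAD^{P^1}(\beta_1)\le\MAD^{P^2}(\beta_2)$, which is the claim.

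The main obstacle is the second case. The dominating parameter $\beta_1'$ supplied by global dominance need not be $\beta_1$ itself, and monotonicity in $\beta$ alone points the wrong way there. The remedy is the equality argument: nonnegativity plus zero mean gives almost-sure equality of widths, and nesting then upgrades this to equality of the intervals. That step needs finiteness of the mean sharpness, which I would state explicitly as a standing assumption.
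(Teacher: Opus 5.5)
Your proof is correct and follows the route the paper has in mind; the paper gives no written proof and only calls the corollary immediate from global dominance together with the monotone shape of the IS--ROC Curve (cf.\ the quasi-convexity appendix). Your second case makes explicit the one point the paper glosses over: along a single TIP, equal mean sharpness forces almost surely identical nested intervals and hence equal $\MAD$, so the curve has no vertical segments. Monotonicity in $\beta$ alone would not yield this, and your finiteness assumption matches the paper's requirement that performance points lie in $\mathbb{R}^2$.
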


\begin{definition}[Pareto optimality]

A TIP $P$ is said to be Pareto optimal if it globally dominates every other interval forecaster.

\end{definition}

\begin{theorem}
\label{lem:isproper}

Let $G$ be the TIP induced by the DGP.
Then

\[
IS^{P}(\alpha,\beta)
\ge
IS^{G}(\alpha,\alpha)
\]

for all TIPs $P$, 
$\forall \alpha\in(0,1]$,
$\forall \beta\in(0,1]$.

\end{theorem}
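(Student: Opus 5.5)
The plan is to reduce the statement to the classical fact that the interval score is proper, and to prove that fact pointwise in the covariate by minimizing a conditional expected loss over the two interval endpoints. First I write out the definitions. For a fixed TIP $P$, a tightness $\beta$ and a covariate value $x$,
\[
\IS^{P}(\alpha,\beta)=\mathbb{E}_X\Big[\,\mathbb{E}\big[\,S^P(\beta,X)+\tfrac{2}{\alpha}\AD^P(\beta,X,Y)\,\big|\,X\big]\Big].
\]
The inner conditional expectation depends on $P$ and $\beta$ only through the two numbers $l=l_\beta(x)$ and $u=u_\beta(x)$, with $l\le u$. So it suffices to show that, for every $x$, the function
\[
\phi_x(l,u)=(u-l)+\tfrac{2}{\alpha}\,\mathbb{E}\big[(Y-u)_+ + (l-Y)_+\mid X=x\big]
\]
is minimized over $l\le u$ at $l=q_{\alpha/2}(x)$ and $u=q_{1-\alpha/2}(x)$. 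Here $q_\tau(x)$ is the conditional quantile of $G$, as in the remark that every predictive distribution induces a TIP. These are exactly the endpoints $l^G_\alpha(x)$ and $u^G_\alpha(x)$ of the oracle TIP $G$ evaluated at tightness $\alpha$. Once this pointwise minimization holds, integrating over $X$ gives $\IS^P(\alpha,\beta)\ge \IS^G(\alpha,\alpha)$, because the oracle's endpoints attain the minimum for every $x$ simultaneously.

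The key step is this pointwise minimization, and I will do it by separating variables. The function $\phi_x$ splits into
\[
\Big[-l+\tfrac{2}{\alpha}\mathbb{E}(l-Y)_+\Big]+\Big[u+\tfrac{2}{\alpha}\mathbb{E}(Y-u)_+\Big].
\]
Each bracket is convex in its own variable. Its subgradient is $-1+\tfrac{2}{\alpha}G(l\mid x)$ for the first bracket (using left and right limits of the conditional CDF at atoms) and $1-\tfrac{2}{\alpha}\big(1-G(u\mid x)\big)$ for the second. These vanish, in the subdifferential sense, exactly at $G(l\mid x)=\alpha/2$ and $G(u\mid x)=1-\alpha/2$, that is, at the $\alpha/2$ and $1-\alpha/2$ quantiles. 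This is the same first-order argument as for the pinball loss.

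The unconstrained minimizers automatically satisfy $l\le u$ because $\alpha/2\le 1-\alpha/2$. Hence the constraint $l\le u$ is never binding, and the separate minimizations are legitimate. This is the point where I must check that the oracle's central intervals are indexed so that tightness $\beta=\alpha$ gives quantile levels $\alpha/2$ and $1-\alpha/2$, which matches the remark on the quantile-induced TIP.

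The main obstacle is technical rather than conceptual. First, I need integrability: I will assume $\mathbb{E}|Y|<\infty$ so that $\phi_x$ is finite. Second, I need to handle atoms or flat regions of $G(\cdot\mid x)$, where quantiles are non-unique. For these I will argue via convexity that any point whose subdifferential contains $0$ is a global minimizer, so any quantile choice works. Third, I need measurability of $x\mapsto q_\tau(x)$, so that the oracle is a legitimate TIP and the outer expectation is well defined. To avoid differentiating under the expectation, I would first try the elementary identity
\[
\mathbb{E}(Y-u)_+=\int_u^\infty\big(1-G(s\mid x)\big)\,ds.
\]
This makes the convexity of each bracket and the location of its minimizer immediate from the monotonicity of the integrand.
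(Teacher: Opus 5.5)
Your proof is correct and follows the same route as the paper. The paper does not prove this itself; it cites Gneiting and Raftery for the decomposition of the interval score into the scaled pinball losses $\tfrac{2}{\alpha}\rho_{\alpha/2}(Y-l)+\tfrac{2}{\alpha}\rho_{1-\alpha/2}(Y-u)$, each minimized in expectation at the corresponding quantile, and your two brackets are exactly these pinball losses up to terms depending on $Y$ alone. You additionally make explicit what the paper leaves implicit: the pointwise conditioning on $X=x$, the subgradient treatment of atoms and flat regions, and the fact that the unconstrained minimizers already satisfy $l\le u$.
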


\begin{proof}

The result is well known in the literature. It follows from the decomposition of the Interval Score into a linear combination of pinball losses together with the strict propriety of the corresponding scoring rule.
See Gneiting and Raftery
\cite{gneiting2007strict}
for details.

\end{proof}

\begin{theorem}[Pareto optimality of the IS--ROC Curve of the oracle]
\label{thm:pareto}

Let $G$ be the TIP induced by the DGP.
Then its IS--ROC Curve is Pareto optimal.

\end{theorem}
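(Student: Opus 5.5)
We must show that for every TIP $P$ and every $\beta_2\in(0,1]$ there is some $\beta_1\in(0,1]$ with $\MS^G(\beta_1)\le \MS^P(\beta_2)$ and $\MAD^G(\beta_1)\le\MAD^P(\beta_2)$. The plan is to use the supporting-line picture from the Interval Score discussion: the level sets of $\IS$ are lines of slope $-\alpha/2$. By \cref{lem:isproper}, for each $\alpha\in(0,1]$ the oracle point with $\beta=\alpha$ minimizes $\MS+\frac{2}{\alpha}\MAD$ over all TIPs and all tightness values. So every performance point of $P$ lies on or above the supporting line of slope $-\alpha/2$ through $(\MS^G(\alpha),\MAD^G(\alpha))$. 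We then choose $\alpha$ so that this supporting line forces dominance.

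Fix $P$ and $\beta_2$, and write $s=\MS^P(\beta_2)$ and $m=\MAD^P(\beta_2)$. For the oracle with continuous $G$, the quantity $\MS^G(\alpha)$ is nonincreasing in $\alpha$, since the intervals are nested. It decreases from the full width as $\alpha\to 0$ down to $0$ at $\alpha=1$, and we assume it is continuous. Hence there is $\alpha^*$ with $\MS^G(\alpha^*)=s$. If $s$ exceeds $\sup_\alpha \MS^G$, we take $\alpha$ small; if $s=0$, we take $\alpha^*=1$ and handle it by the median property separately. Set $\beta_1=\alpha^*$. Then the sharpness inequality holds with equality, and \cref{lem:isproper} gives
\[
s+\tfrac{2}{\alpha^*}m\;\ge\;\MS^G(\alpha^*)+\tfrac{2}{\alpha^*}\MAD^G(\alpha^*)=s+\tfrac{2}{\alpha^*}\MAD^G(\alpha^*).
\]
This yields $\MAD^G(\alpha^*)\le m$, which is exactly pointwise dominance. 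Since $\beta_2$ was arbitrary, $G$ globally dominates $P$, so the oracle's IS--ROC Curve is Pareto optimal.

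The delicate step is the existence of $\alpha^*$, that is, continuity and range of $\alpha\mapsto\MS^G(\alpha)=\mathbb{E}_X[q_{1-\alpha/2}(X)-q_{\alpha/2}(X)]$. I would handle it first by assuming $G_t(\cdot\mid x)$ is continuous with strictly increasing quantiles, which gives continuity by dominated convergence when the widths are integrable. Then I would treat the boundary cases. When $s$ is larger than every oracle width, I would use the general Interval Score inequality with $\alpha$ near $0$: there the oracle has $\MAD^G\to0$ and width at most $s$ (if the width is unbounded, such an $s$ can always be matched). When $s$ is small, I would use $\alpha=1$, where $\IS$ reduces to twice the absolute error and the conditional median $\mu$ minimizes it. If $\MS^G$ has jumps, for instance with atoms in $G$, I would replace the exact match by a convexity or supporting-line argument on the lower convex envelope. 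That case is where I expect the main obstacle to lie.
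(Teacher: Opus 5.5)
Your central step is correct, and it takes a genuinely different route from the paper.

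\textbf{The paper's route.} The paper argues by contradiction. Suppose some competitor point $(\MS^{P}(\beta),\MAD^{P}(\beta))$ weakly dominated an oracle point $(\MS^{G}(\alpha),\MAD^{G}(\alpha))$ with one strict inequality. Evaluating the Interval Score at the oracle's \emph{own} level $\alpha$ then gives $\IS^{P}(\alpha,\beta)<\IS^{G}(\alpha,\alpha)$, contradicting \cref{lem:isproper}. Because the level $\alpha$ is already attached to the oracle point, no intermediate-value step is needed, and the argument works for every DGP. What it shows, however, is only that no oracle operating point is strictly dominated. Its contradiction hypothesis is not the negation of the stated definition, which is global dominance of every TIP.

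\textbf{Your route.} You fix the competitor point, match its sharpness with some $\alpha^*$, and apply \cref{lem:isproper} at $\alpha^*$. This proves the definition literally: every competitor point is weakly dominated by an actual oracle point. That is a stronger covering property, but it stands or falls with the existence of $\alpha^*$.

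\textbf{The gap.} The existence of $\alpha^*$ is where the proposal has a genuine gap. Your plan for the irregular cases cannot succeed, because outside the regular case the covering property is false.
\begin{itemize}
\item For the discrete DGP of Appendix~\ref{app:linear}, the oracle only issues $[-4,2],[-4,0],[-2,0],[0,0]$, i.e. the points $(6,0),(4,0.4),(2,1),(0,1.8)$. A TIP whose interval at some tightness is $[-4,1]$ attains $(5,0.2)$, which none of these dominates.
\item For $G=\mathrm{Unif}(0,1)$ the oracle points are $(1-\beta,\beta^2/4)$, all with positive $\MAD$. An interval containing $[0,1]$ gives $\MAD=0$, so your step ``take $\alpha$ near $0$'' fails exactly when $m=0$.
\item Jumps of $\beta\mapsto\MS^{G}(\beta)$ come from gaps in the support of $G(\cdot\mid x)$, not from atoms, so your stated hypothesis is insufficient. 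Take $G$ uniform on $[0,1]\cup[2,3]$, which is continuous with strictly increasing quantiles. The only oracle point with $\MS\le 0.5$ is $(0,1)$, while the interval $[1,1.5]$ yields $(0.5,0.75)$.
\end{itemize}

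The supporting-line argument only places competitor points above the lower convex envelope of the oracle points. That means they are dominated by randomized mixtures of two oracle intervals (\cref{lem:convexification}), not by oracle points themselves. Appealing to \cref{thm:convexity_general} would be circular, since the paper derives it from this theorem.

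\textbf{How to repair it.} You have two options.
\begin{itemize}
\item State the hypotheses your argument actually uses: continuous quantile functions $\tau\mapsto q_\tau(x)$ and finite, integrable widths, so that $\MS^{G}$ is continuous on $(0,1]$. When $\MS^{P}(\beta_2)$ exceeds the range of $\MS^{G}$, you additionally need $\MAD^{P}(\beta_2)>0$.
\item Alternatively, prove only the non-dominance property, as the paper's argument does.
\end{itemize}
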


\begin{proof}

Assume, by contradiction, that the IS--ROC Curve associated with $G$ is not Pareto optimal.
Then there exist $\alpha,\beta\in(0,1]$ and an interval forecaster $P$ such that
\[
\MS^{P}(\beta)
\le
\MS^{G}(\alpha)
\text{ and }
\MAD^{P}(\beta)
\le
\MAD^{G}(\alpha),
\]

with at least one inequality being strict.
Suppose, without loss of generality, that
$\MAD^{P}(\beta)
<
\MAD^{G}(\alpha).
$

Since

\[
\IS^{P}(\alpha,\beta)
=
\MS^{P}(\beta)
+
\frac{2}{\alpha}
\MAD^{P}(\beta),
\]
we obtain
\[
\IS^{P}(\alpha,\beta)
<
\MS^{G}(\alpha)
+
\frac{2}{\alpha}
\MAD^{G}(\alpha)
=
\IS^{G}(\alpha,\alpha),
\]
which contradicts \Cref{lem:isproper}.

\end{proof}

The previous theorem admits a simple graphical interpretation: the IS--ROC Curve associated with the true data generating process coincides with the Pareto frontier in the $(\mathrm{S},\MAD)$ plane. Consequently, no competing TIPs can generate operating points lying strictly below and to the left of the ideal curve.




\subsubsection{Convexity of the oracle}

In this subsection we show that the IS--ROC Curve associated with the DGP is always convex. We provide two proofs. The first one, restricted to the covariate-free case, relies on stronger regularity assumptions and gives an explicit analytical derivation. The second proof is more general and exploits a technical convexification lemma proved in Appendix~\ref{app:convexification}.
\begin{theorem}
\label{thm:convexity_simple} Consider the case without covariates, i.e. $G_t=G \hspace{5pt} \forall t$, and
assume that $G(y),S(\beta),
MAD(\beta)$ are strictly monotonic and differentiable.
If the IS--ROC Curve $\MAD(S)$ induced by $G$ is twice differentiable, then, it is strictly convex.
\end{theorem}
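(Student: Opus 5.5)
We plan to compute the slope $d\MAD/dS$ explicitly as a function of $\beta$ and show that it is strictly increasing in $S$. In the covariate-free case the oracle TIP is $l_\beta = q_{\beta/2}$ and $u_\beta = q_{1-\beta/2}$, where $q_\tau = G^{-1}(\tau)$. This inverse is well defined and differentiable because $G$ is strictly monotone and differentiable; writing $g = G'$, we have $dq_\tau/d\tau = 1/g(q_\tau)$. Then
\[
S(\beta)=q_{1-\beta/2}-q_{\beta/2},
\qquad
\MAD(\beta)=\int_{u_\beta}^{\infty}(y-u_\beta)\,dG(y)+\int_{-\infty}^{l_\beta}(l_\beta-y)\,dG(y).
\]

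The first step is to differentiate both quantities in $\beta$. For the sharpness,
\[
S'(\beta) = -\tfrac12\Big(\tfrac{1}{g(u_\beta)}+\tfrac{1}{g(l_\beta)}\Big) < 0 .
\]
For $\MAD$ we use Leibniz's rule; the boundary terms vanish because the integrands are zero at the endpoints. This gives
\[
\MAD'(\beta) = -\big(1-G(u_\beta)\big)\,u_\beta' + G(l_\beta)\,l_\beta'.
\]
At the oracle, $1-G(u_\beta)=G(l_\beta)=\beta/2$, so
\[
\MAD'(\beta)=\tfrac{\beta}{2}\,(l_\beta'-u_\beta') = \tfrac{\beta}{2}\,S'(\beta).
\]
This is the key identity. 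By the chain rule, which is legitimate since $S$ is strictly monotone and hence invertible, the slope of the curve is
\[
\frac{d\MAD}{dS} = \frac{\MAD'(\beta)}{S'(\beta)} = \frac{\beta}{2}.
\]
This matches the tangency picture from the Interval Score discussion: the level lines of $\IS(\alpha,\cdot)$ have slope $-\alpha/2$, and $\beta=\alpha$ is optimal by \Cref{lem:isproper}. The sign convention needs care. Measured with $S$ increasing, the slope is $-\beta/2$, since $\MAD$ decreases as $S$ grows. I would fix the orientation carefully so that the resulting slope is $-\beta(S)/2$.

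The second step is to differentiate the slope again in $S$:
\[
\frac{d^2\MAD}{dS^2} = -\frac{1}{2}\,\frac{d\beta}{dS} = -\frac{1}{2\,S'(\beta)} > 0,
\]
because $S'(\beta)<0$ and $S'(\beta)$ is finite and nonzero under the stated hypotheses. Hence the second derivative is strictly positive wherever it exists, and the assumed twice-differentiability of $\MAD(S)$ guarantees it exists everywhere. This gives strict convexity.

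The main obstacle is justifying that $S'(\beta)\neq 0$ and that the differentiation is valid. This requires $g>0$ at the quantiles, or at least finite $1/g$. I would derive it from the strict monotonicity of $S$ together with the assumed twice-differentiability of $\MAD(S)$, treating any points where $S'=0$ as isolated. Strict convexity still follows there, because the slope $-\beta(S)/2$ is a strictly monotone function of $S$; this is the cleaner route if derivative degeneracies arise. The exchange of differentiation and integration is routine once $G$ has a finite first moment, which is needed anyway for $\MAD$ to be finite.
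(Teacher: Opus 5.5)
Your route is the same as the paper's: differentiate $S$ and $\MAD$ in $\beta$, obtain $d\MAD/dS=-\beta/2$, then differentiate once more to get $-\tfrac12\,d\beta/dS>0$. However, there is a sign error at exactly the step that decides convex versus concave. Since $S(\beta)=u_\beta-l_\beta$, you have $S'(\beta)=u_\beta'-l_\beta'$, so your (correct) Leibniz computation gives
\[
\MAD'(\beta)=\tfrac{\beta}{2}\,(l_\beta'-u_\beta')=-\tfrac{\beta}{2}\,S'(\beta),
\]
not $+\tfrac{\beta}{2}S'(\beta)$.

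Appealing to a choice of ``orientation'' does not repair this. The quantity $d\MAD/dS=\MAD'(\beta)/S'(\beta)$ is an ordinary derivative with no sign convention to choose. If it really equalled $+\beta/2$, your second step would give $d^2\MAD/dS^2=\tfrac12\,d\beta/dS<0$, i.e.\ a concave curve. A quick check catches the slip: as $\beta$ grows the intervals shrink, so $\MAD$ increases while $S$ decreases, and $\MAD'$ and $S'$ must have opposite signs. This matches the paper's $\MAD'(\beta)=\tfrac{\beta}{4}\bigl(1/f(B)+1/f(A)\bigr)>0$ against $S'(\beta)<0$. With the minus sign restored, the slope is $-\beta/2$ outright and the rest of your argument coincides with the paper's.

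Your regularity remarks are sound and slightly sharper than the paper, which silently assumes $f>0$ at the quantiles. Strict convexity already follows from $d\MAD/dS=-\beta(S)/2$ being strictly increasing in $S$. Moreover, $S'(\beta)\neq 0$ is forced by the hypotheses: twice differentiability of $\MAD(S)$ makes $\beta(S)$ differentiable, and differentiating $S(\beta(s))=s$ gives $S'(\beta(s))\,\beta'(s)=1$.
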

\begin{proof}
We want to prove that
\[
\frac{d^2\MAD}{d\mathrm{S}^2}>0.
\]
Since the IS--ROC Curve is defined available through the parametrization
\[
\beta
\longmapsto
\left(
\mathrm{S}(\beta),
\MAD(\beta)
\right),
\]
we first compute
\[
\frac{d\MAD}{d\mathrm{S}}
=
\frac{\MAD'(\beta)}
{\mathrm{S}'(\beta)}.
\]
The sharpness is
\[
\mathrm{S}(\beta)
=
q_{1-\beta/2}
-
q_{\beta/2},
\]
where $\gamma\mapsto q_\gamma$ denotes the quantile function of \(G\).
Using the classical identity
\[
\frac{dq(\gamma)}{d\gamma}
=
\frac{1}{f(q(\gamma))},
\]
where \(f_G\) denotes the density of \(G\), we obtain
\[
\mathrm{S}'(\beta)
=
-
\frac{1}{2f_G(B)}
-
\frac{1}{2f_G(A)},
\]
where, for brevity,
$A=q_{\beta/2} \text{ and }
B=q_{1-\beta/2}.$
Now consider
\[
\MAD(\beta)
=
\mathbb{E}
\left[
(Y-B)_+
+
(A-Y)_+
\right].
\]
Differentiating the two terms separately yields
\[
\frac{\partial}{\partial B}
\mathbb{E}(Y-B)_+
=
-
\Pr(Y>B)
=
-
\frac{\beta}{2},
\]
and
\[
\frac{\partial}{\partial A}
\mathbb{E}(A-Y)_+
=
\Pr(Y<A)
=
\frac{\beta}{2}.
\]
Therefore,
\[
\MAD'(\beta)
=
\frac{\beta}{4}
\left(
\frac1{f(B)}
+
\frac1{f(A)}
\right).
\]
Combining the previous expressions gives
\[
\frac{d\MAD}{d\mathrm{S}}
=
-
\frac{\beta}{2}.
\]
Differentiating once more,
\[
\frac{d^2\MAD}{d\mathrm{S}^2}
=
-\frac12
\frac{d\beta}{d\mathrm{S}}.
\]
Finally, due to the strict decreasing monotonicity of $\mathrm{S}(\beta)$,
we conclude that
\[
\frac{d^2\MAD}{d\mathrm{S}^2}>0,
\]
which proves the strict convexity of the IS--ROC Curve.

\end{proof}

\begin{theorem}
\label{thm:convexity_general}

Let \(G\) be an arbitrary data generating process, possibly depending on covariates.
Then the IS--ROC Curve induced by \(G\) is convex.

\end{theorem}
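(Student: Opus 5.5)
We plan to prove convexity of the oracle curve as a consequence of its Pareto optimality, through a mixing (convexification) argument. The idea is to show that any point on a chord between two oracle operating points is matched or beaten by some TIP. By \Cref{thm:pareto}, the oracle curve cannot lie strictly above such a point, and so the curve lies below its chords.

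\textbf{Step 1: a randomised/mixture TIP.} Fix $\beta_1<\beta_2$ and $\lambda\in[0,1]$, and consider the oracle intervals $G_{\beta_1}(x)$ and $G_{\beta_2}(x)$. We would build a TIP $P^\lambda$ whose $(\MS,\MAD)$ point equals
\[
\lambda\bigl(\MS^G(\beta_1),\MAD^G(\beta_1)\bigr)+(1-\lambda)\bigl(\MS^G(\beta_2),\MAD^G(\beta_2)\bigr).
\]
The cleanest device is to augment the covariate with an independent auxiliary variable $U\sim\mathrm{Unif}[0,1]$, or equivalently to split $\mathcal X$ measurably. We then use $G_{\beta_1}(x)$ when $U<\lambda$ and $G_{\beta_2}(x)$ otherwise. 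Both $\MS$ and $\MAD$ are expectations, so they mix linearly, and the point is achieved exactly. This is presumably the content of the convexification lemma in Appendix~\ref{app:convexification}, which we may invoke. We must still check that the mixture can be embedded in a single TIP family satisfying nesting and merging. We would do this by defining, for the parameter value of interest, the interval $G_{\beta_1}$ or $G_{\beta_2}$ according to $U$, and interpolating monotonically in $\beta$ elsewhere. Nesting holds pointwise because for each fixed $(x,U)$ we only use oracle intervals, which are nested.

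\textbf{Step 2: oracle beats the mixture.} The mixture is itself a forecaster that does not use more information than $G$. Since $U$ is independent of $Y$ given $X$, $G$ remains the conditional DGP given $(X,U)$. Therefore \Cref{lem:isproper} and hence \Cref{thm:pareto} apply to it, and so no oracle point lies strictly dominated by the mixture point. To turn this into convexity we use the tangent-line characterization instead of bare dominance. For each $\alpha$, the oracle minimizes $\MS+\frac{2}{\alpha}\MAD$ over all TIPs, and this minimum is attained at $\beta=\alpha$. Hence for every $\alpha$ the line of slope $-\alpha/2$ through $(\MS^G(\alpha),\MAD^G(\alpha))$ supports the set of all achievable points, including every oracle point and every mixture point. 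A curve that at each of its points has a supporting line below all its other points lies on the boundary of its convex hull. Since $\MS^G$ is monotone in $\beta$, this gives convexity of $\MAD$ as a function of $\MS$ along the curve. The supporting-line argument alone already yields convexity without mixtures. The mixtures serve to show that the region above the curve is convex, which handles any flat or kinked portions.

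\textbf{Main obstacle.} The delicate point is regularity in the general covariate case. The map $\beta\mapsto \MS^G(\beta)$ may fail to be strictly monotone or continuous, for instance when $G$ has atoms, so $\MAD$ need not be a well-defined function of $\MS$. There may also be gaps in the curve, and these must be filled by chords, which is exactly where the mixture construction is needed. We would first treat the supporting-line statement, which requires only \Cref{lem:isproper}. We would then interpret "convex" as "the curve coincides with the lower boundary of the convex hull of its points." Any gaps are handled via the appendix lemma, verifying that the chord segments are realized by mixtures and hence lie on the frontier.
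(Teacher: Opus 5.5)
Your proof is correct, but the step that actually carries it is not the paper's argument. The paper argues by contradiction: if convexity fails, some chord between two oracle points lies strictly below the curve. \Cref{lem:convexification} then realizes the chord points by randomly mixing the two endpoint oracle predictors. Such a mixture point would strictly dominate an oracle point, contradicting \Cref{thm:pareto}.

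Your tangent-line argument instead applies \Cref{lem:isproper} with $P=G$ itself. This gives, for all $\alpha,\beta\in(0,1]$,
\[
\MAD^G(\beta)\ \ge\ \MAD^G(\alpha)-\tfrac{\alpha}{2}\bigl(\MS^G(\beta)-\MS^G(\alpha)\bigr),
\]
that is, an affine minorant of slope $-\alpha/2$ touching the curve at the point of tightness $\alpha$. For any three attained sharpness values $s_1<s_0<s_2$, this yields the chord inequality at $s_0$ directly.

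What your route buys:
\begin{itemize}
\item It needs no randomized forecasters. You therefore avoid extending the propriety and Pareto results, which are stated for TIPs, to mixtures.
\item It gives convexity in the three-point sense on whatever set of sharpness values the oracle actually attains, for example the finitely many points of a discrete DGP as in Appendix~\ref{app:linear}, at no extra cost.
\item It identifies $-\alpha/2$ as a subgradient at tightness $\alpha$. This is the non-smooth version of $d\MAD/d\mathrm{S}=-\beta/2$ from \Cref{thm:convexity_simple}, and it is exactly what justifies $g(\alpha)=\alpha$ for the oracle in the tangent calibration.
\end{itemize}
The paper's route, in turn, ties convexity to the convexification machinery reused for ensembling, and it shows that chord points are actually achievable.

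Consequences for your write-up:
\begin{itemize}
\item Your Step~1 (embedding the mixture in a nested TIP family) and the mixture discussion in your final paragraph can be dropped.
\item The worry that $\MAD$ might not be a function of $\MS$ is unfounded. For any TIP, $\MS(\beta_1)=\MS(\beta_2)$ with $\beta_1<\beta_2$ forces, by nesting, almost surely identical intervals, hence equal $\MAD$. For the oracle, your two supporting-line inequalities (at $\alpha=\beta_1$ and $\alpha=\beta_2$) give the same conclusion.
\end{itemize}
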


\begin{proof}
Let
\[
c:\MS\longmapsto \MAD
\]
denote the IS--ROC Curve associated with G.
and assume, by contradiction, that it  is not convex.
Then, on the curve, there exist two points
\[
(s_1,c(s_1)),
\qquad
(s_2,c(s_2)),
\]
such that their convex envelop lies below the curve, i.e. 
\[
tc(s_1)
+
(1-t)c(s_2)
<
c\!\left(
ts_1+(1-t)s_2
\right), \forall t\in(0,1).
\]
Define the function
\[
\widehat c(s)
=
\begin{cases}
c(s),
&
s\le s_1,
\\[1ex]
tc(s_1)
+
(1-t)c(s_2),
&
s\in[s_1,s_2],
\\[1ex]
c(s),
&
s\ge s_2,
\end{cases}
\]
where
\[
t
=
\frac{s_2-s}{s_2-s_1}.
\]
By Lemma~\ref{lem:convexification}, proved in Appendix~\ref{app:convexification}, the  segment between $s_1$ and $s_2$ can be interpreted as a randomized TIP obtained by mixing the two endpoint predictors.
Consequently, \(\widehat c\) is the IS-ROC Curve of an admissible interval forecaster.
Since the segment lies strictly below the original curve, \(\widehat c\) globally dominates the IS--ROC Curve associated with the DGP.
This contradicts the Pareto optimality established in Theorem~\ref{thm:pareto}.
Therefore, the IS--ROC Curve associated with the true data generating process must be convex.

\end{proof}

While the IS–ROC Curve associated with the oracle is always convex, this property does not generally extend to arbitrary probabilistic forecasters. Nevertheless, every IS–ROC Curve satisfies a quasi-convexity property, as proved in Appendix~\ref{app:monotonicity}.

\subsubsection{Pareto optimality is not unique to the oracle}
\label{subsec:nonuniqueness}

Although the IS--ROC Curve associated with the DGP is Pareto optimal, Pareto optimality alone does not uniquely identify the underlying predictive distribution.
Indeed, distinct predictive distributions may generate exactly the same set of trade-offs between mean sharpness and mean absolute distance. In fact, the IS--ROC Curve depends only on the set of  prediction intervals that can be generated by the forecaster by a proper selection of the tightness $\beta$. Consequently, it is well possible that distinct distributions yield the same prediction intervals, but with different choices of $\beta$.

In the covariate-free case, this feature is easily understood: as shown below, a simple example is provided by symmetric distributions sharing the same median. It is  immediate to see that these distributions share the same IS--ROC Curve.
More in general, one can find several distributional families that generate identical IS--ROC Curves. Again, there is a parallelism with the properties of the classical ROC Curve for binary classifiers, as it is well known that distinct classifiers may share the same ROC Curve.


\begin{theorem}[Non-uniqueness of the forecasters achieving the Pareto frontier]
\label{thm:nonuniqueness}
Consider the case without covariates.
Let $F$ and $G$ be two symmetric distributions such that
$\operatorname{median}(F)
=
\operatorname{median}(G)$,
where $G$ is the DGP.
Then, the IS--ROC Curve induced by $F$ coincides with the ideal IS--ROC Curve induced by $G$.

\end{theorem}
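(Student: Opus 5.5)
The plan is to show that both curves are reparametrizations of one curve indexed by the half-width of a symmetric interval about the common median. Let $m=\operatorname{median}(F)=\operatorname{median}(G)$. By symmetry of $F$ about $m$, its quantile function satisfies $q^F_{\beta/2}=m-a_F(\beta)$ and $q^F_{1-\beta/2}=m+a_F(\beta)$, where
\[
a_F(\beta):=q^F_{1-\beta/2}-m\ge 0 .
\]
So the TIP induced by $F$ through the quantile remark is $P^F_\beta=[m-a_F(\beta),\,m+a_F(\beta)]$. In the same way $P^G_\beta=[m-a_G(\beta),\,m+a_G(\beta)]$. The nesting property makes $a_F$ and $a_G$ nonincreasing in $\beta$, and the merging property gives $a_F(1)=a_G(1)=0$.

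Next I would write the performance point of a symmetric interval $[m-a,m+a]$ as a function of $a$ alone. In the covariate-free case the sharpness is the deterministic quantity $2a$. The expectation in $\MAD$ is always taken under the DGP $G$, whichever forecaster produced the interval. Since $(y-m-a)_+ + (m-a-y)_+=(|y-m|-a)_+$, this gives
\[
\bigl(\MS,\MAD\bigr)=\bigl(2a,\;h(a)\bigr),\qquad h(a):=\mathbb{E}_{Y\sim G}\bigl[(|Y-m|-a)_+\bigr].
\]
It follows that $\MS^{F}(\beta)=2a_F(\beta)$ and $\MAD^{F}(\beta)=h(a_F(\beta))$, and likewise for $G$. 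Both IS--ROC Curves are therefore traces of the single curve $a\mapsto(2a,h(a))$. The trace for $F$ runs over $a\in a_F((0,1])$ and the trace for $G$ over $a\in a_G((0,1])$. Because the first coordinate $2a$ is injective, the two curves coincide as subsets of the plane, i.e.\ as graphs $\MAD(\MS)$, exactly when these two ranges agree. As parametrized curves they differ only by the reparametrization $\beta\mapsto a_G^{-1}(a_F(\beta))$.

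The remaining step, and the main obstacle, is to show that $a_F((0,1])=a_G((0,1])$. This is where a regularity hypothesis must be made explicit. If $F$ and $G$ are continuous and strictly increasing on supports with the same half-extent $R=\sup\{|y-m|: y \text{ in the support}\}\le\infty$, then $a_F$ and $a_G$ are continuous and strictly decreasing. Each then maps $(0,1]$ onto $[0,R)$ (or $[0,\infty)$ when the support is unbounded), so the ranges agree. The typical case I would treat first is $F$ and $G$ both with full support on $\mathbb{R}$, where both ranges are $[0,\infty)$.

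Without such a hypothesis the statement can fail. A gap in the support of $F$ makes $a_F$ jump, and a bounded-support $F$ paired with an unbounded $G$ gives different ranges. In those cases the curve of $F$ is only a subset of the oracle curve, though it still lies on the Pareto frontier of \Cref{thm:pareto}. I would state this caveat in a remark. Finally, I would note that because the curve of $F$ coincides with that of $G$, \Cref{thm:pareto} shows it is Pareto optimal, which is the non-uniqueness the title refers to.
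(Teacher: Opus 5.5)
Your proposal is correct and is essentially the paper's argument. For each sharpness level $s$, both forecasters produce the same interval $[m-s/2,\,m+s/2]$, at tightness $2G(m-s/2)$ and $2F(m-s/2)$ respectively, and since $\MAD$ is computed under $G$ it depends only on that interval; your half-width parametrization $a\mapsto(2a,h(a))$ is a cleaner packaging of the same idea. The range-matching condition you isolate is what the paper's proof assumes without saying so when it asserts that $\beta=2F(-s^\star/2)$ lies in $(0,1]$ for every $s^\star$ attained by $G$, so your regularity hypothesis and caveat (e.g.\ $F$ uniform against a Gaussian $G$) sharpen the statement rather than reveal a gap in your proof.
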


\begin{proof}

Consider the IS--ROC Curve induced by $G$.
By construction, for every actual sharpness level
$s^\star \geq 0$, there exists a value $\alpha\in(0,1]$
such that
$\mathrm{S}^G(\alpha)=s^\star$ .
Let 
$\MAD^\star
:=
\MAD^G(\alpha).$ We prove that there exists a value
$\beta\in(0,1]$
such that
$\mathrm{S}^F(\beta)=s^\star$
and
$\MAD^F(\beta)=\MAD^\star.$
Without loss of generality, and only for notational convenience, we assume that the common median of $F$ and $G$ is equal to zero.
The sharpness associated with $G$ is given by
\[
\mathrm{S}^G(\alpha)
=
u^G_\alpha-l^G_\alpha
=
q^G_{1-\alpha/2}
-
q^G_{\alpha/2},
\]
where $q^G$ denotes the quantile function of $G$.
Since $G$ is symmetric around zero, we have
$q^G_{1-\alpha/2}
=
\frac{s^\star}{2},$
and
$q^G_{\alpha/2}
=
-\frac{s^\star}{2}.$
Therefore,
$\frac{\alpha}{2}
=
1-G\left(\frac{s^\star}{2}\right),$
and, by symmetry,
$\frac{\alpha}{2}
=
G\left(-\frac{s^\star}{2}\right).$
Hence,
\[
\alpha
=
2G\left(-\frac{s^\star}{2}\right).
\]
The previous derivation relies only on the symmetry assumption. Therefore, the same argument can be applied to the distribution $F$.
Consequently, there exists $\beta\in(0,1]$
such that $\mathrm{S}^F(\beta)=s^\star$, with
$\beta
=
2F\left(-\frac{s^\star}{2}\right).
$
It remains to prove that $\MAD^F(\beta)
=
\MAD^\star .$
Using the symmetry of $F$, the two contributions outside the prediction interval are equal, and therefore
\[
\MAD^F(\beta)
=
\mathbb{E}
\left[
(Y-u^F_\beta)_+
\right]
+
\mathbb{E}
\left[
(l^F_\beta-Y)_+
\right]
=
2\mathbb{E}
\left[
(Y-u^F_\beta)_+
\right].
\]
Analogously, for $G$,
\[
\MAD^\star
=
\MAD^G(\alpha)
=
2\mathbb{E}
\left[
(Y-u^G_\alpha)_+
\right].
\]
Thus, it is sufficient to prove that the upper endpoints of the two prediction intervals coincide.
Since $\alpha =
2G\left(-\frac{s^\star}{2}\right)$,
$u^G_\alpha
=
q^G_{1-\alpha/2}
=
\frac{s^\star}{2}.
$
Similarly, from the definition of $\beta$,
$u^F_\beta
=
q^F_{1-\beta/2}
=
\frac{s^\star}{2}.$
Hence,
\[
u^F_\beta
=
u^G_\alpha .
\]
Therefore,
\[
\MAD^F(\beta)
=
\MAD^G(\alpha)
=
\MAD^\star,
\]
which proves that every point of the IS--ROC Curve generated by $G$ is also generated by $F$.
Since the argument holds for every $s^\star\geq0$,
the two IS--ROC Curves coincide.
\end{proof}

\begin{remark}

Since the right derivative of the ideal IS--ROC Curve at
$s=0$ is given by the value $-\frac{\beta}{2}$, no IS--ROC Curve can belong to the triangular region identified by the coordinate axes and the straight line passing through
$(0,\MAD^{G}(1))$
with slope $-1/2$.

\end{remark}

\section{Calibration and Convexification of IS--ROC Curves}
\label{sec:calibration}

The previous section established that the IS--ROC Curve associated with the DGP is both Pareto optimal and convex.
Let now consider what happens when the performances of several Tunable Interval Predictors (TIPs) are compared on a given data set. If we look at their IS--ROC Curves, three situations may occur (see Figure \ref{fig:img_esempi_comportamenti}):

\begin{enumerate}
    \item a dominant IS--ROC Curve exists and is convex;
    \item a dominant curve exists but is not convex;
    \item no dominant curve exists.
\end{enumerate}

This section introduces a calibration procedure capable of handling all three situations.
The basic idea is analogous to ROC analysis. First, through dominance and convexification, we get closer to the efficient frontier and then we compute a calibration map that associates each value of the Interval Score parameter $\alpha$ with the corresponding optimal tightness $\beta$.

\begin{figure}[H]
\centering
\includegraphics[height=\textwidth, angle=-90]{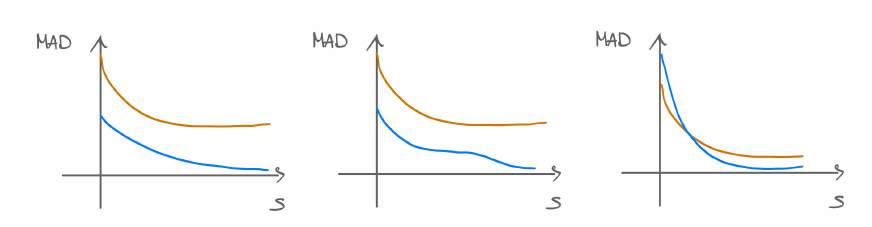}
\caption{Illustration of the three possible configurations of IS--ROC Curves. On the left, the existence of a dominant convex curve; in the middle, the existence of a dominant but non-convex curve; on the right, the absence of a dominant curve.}
\label{fig:img_esempi_comportamenti}
\end{figure}

\subsection{Calibration of a TIP}
\label{subsec:calibration}
The calibration of a TIP serves at least two purposes. First, it optimizes the interval score performances of the TIP by optimally adapting the tightness $\beta$ to the desired coverage $\alpha$. Second, when the TIP is not a probabilistic one, it converts it into a probabilistic model. Both tasks have a parallel in the calibration of tunable classifiers based on the ROC curve. For instance, by ROC calibration, a Support Vector Classifier, whose rationale is intrinsically deterministic, is enabled to yield probabilities. Calibration could be used also to obtain quantiles from an expectile-based TIP, whose tightness parameter specifies expectile thresholds rather than quantile levels.
\begin{definition} \emph{Calibration of a TIP}. The calibration of a TIP consists in computing the map $g:(0,1]\longrightarrow(0,1]$, where
\begin{equation}
\label{eq.recal}   
g(\alpha) = \arg \min_\beta \IS(\alpha,\beta)= \arg \min_\beta \left ( \MS(\beta)
+
\frac{2}{\alpha}
\MAD(\beta) \right ).
\end{equation}
\end{definition}
The interpretation is straightforward. In order to minimize $\IS(\alpha,\beta)$,
instead of using the nominal tightness level $\beta=\alpha$, one can better employ an adjusted tightness 
$\beta=g(\alpha)$.

If the TIP originates from a probability distribution, once the map  $g$ has been determined, it can be used to calibrate the predictive distribution through a transformation of the original quantiles.
More precisely, let $\tau \in (0,1)$ denote the quantile level of the original predictive distribution. The corresponding calibrated quantile level is defined as

\begin{equation} \label{eq:calibrated_tau}
  \tilde{\tau} = 
  \begin{cases}
    \dfrac{g(2\tau)}{2}, & \tau \leq \tfrac{1}{2}, \\[1.2ex]
    1 - \dfrac{g\!\left(2 - 2\tau\right)}{2}, & \tau > \tfrac{1}{2}.
  \end{cases}
\end{equation}

The calibrated predictive distribution is then obtained by evaluating the original quantile function at the transformed level $\tilde{\tau}$.

Finally, observe that calibration affects the parameterization of the IS--ROC Curve but does not affect its geometry. Consequently, two TIPs sharing the same IS--ROC Curve are equivalent up to calibration.

\subsection{Convex case: calibration by the tangent method}
\label{subsec:dominant}

Assume that TIP \(P\) generates a dominant IS--ROC Curve that is already convex.
For a given value \(\alpha\in(0,1]\) of the Interval Score parameter, the set of points in the $(\MS,\MAD)$ plane that achieve interval score $\IS$ satisfy the equation
$$
\IS = \MS + \frac{2}{\alpha} \MAD
$$
that defines a family of parallel straight lines parametrized by $\IS$. It is immediate to see that in the $(\MS,\MAD)$ plane, these lines have slope $-\alpha/2$.

Since the IS--ROC curve is convex, for every value of \(\alpha\) there exists a unique supporting line with that slope which is tangent to the IS--ROC Curve.
Then, $\beta=g(\alpha)$ defined in (\ref{eq.recal}) coincides the value of the tightness parameter corresponding to this tangency point.


















\subsection{Non convex case: convex hull and calibration}
\label{subsec:convexification}

The previous subsection assumes that the dominant IS--ROC Curve is convex.
In practice, however, a dominant forecaster may generate a non-convex set, see \ref{app:monotonicity}. In this case, the calibration procedure described above cannot be applied directly because, for some values of \(\alpha\), several tangency points may exist or no tangency point may exist at all.

A natural solution consists of replacing the IS--ROC Curve with its convex hull.
The concept is analogous to the convex hull construction commonly used in ROC analysis to combine multiple classifiers
\cite{provost1997analysis}. When different classifiers generate different efficient frontiers, the operating points are combined to obtain the upper convex envelope, which represents the best achievable performance.
Likewise, in the \((\MS,\MAD)\) plane, the efficient frontier generated by the TIP is the convex hull of its IS--ROC curve.

By Lemma~\ref{lem:convexification}, proved in Appendix~\ref{app:convexification}, every point belonging to a convexification segment admits a probabilistic interpretation.
More precisely, each point on the segment corresponds to the performance of an \emph{interpolation interval predictor} obtained by sampling the interval predictions associated with the two endpoints of the segment. The sampling ratio will control where the resulting performance point lies
Therefore, the convex hull does more than draw artificial lines in the (MS, MAD) plane; in fact, it  introduces new performance points that can be achieved through suitable interpolation strategies.

\subsubsection*{Calibration of convexified curves}
Once the convex hull has been computed, its boundary can be partitioned into subsets of two types:
\begin{itemize}
\item strictly convex curves inherited from the original IS--ROC Curve;
\item linear segments, e.g. those introduced by the convexification procedure.
\end{itemize}
While the strictly convex curves are calibrated as described in Section~\ref{subsec:dominant}, the linear segments deserve a comment. First of all note that, letting $-\hat\alpha/2$ be the slope of the segment, it represents an iso-performance set because all its points achieve the same Interval Score $\IS(\hat\alpha)=\MS + \frac{2}{\hat\alpha} \MAD$. Therefore, in this case the calibration map (\ref{eq.recal}) does not return a unique $\beta$, but an interval of values $(\beta_A,\beta_B)$ where $\beta_A$ and $\beta_B$ are the tightnesses of the endpoints of the segment. In view of this non-uniqueness, we prefer to let $g(\hat\alpha$) undefined. If we let $\alpha_A$ and $\alpha_B$ be such that
\[
\beta_A=g(\alpha_A),
\qquad
\beta_B=g(\alpha_B),
\]
the calibration over the convexified segment will be a step function:
\[
g(\alpha)
=
\begin{cases}
\beta_A,
&
\alpha\in\alpha_A,\hat{\alpha}]
\\[1ex]
\mathrm{undefined}, &\alpha = \hat \alpha
\\[1ex]
\beta_B
&
\alpha\in[\hat{\alpha},\alpha_B]
\end{cases}
\]
We refer to this feature as the
\emph{step calibration of convexified segments}. It is worth noting that, if the calibrated curve is translated in a probability distribution via quantile calculation as explained in Subsection \ref{subsec:calibration}, the convexified segments will produce constant segments in the probability distribution. As an example, in Appendix \ref{app:linear}, we will present an IS-ROC curve made by segments, whose associated distribution is a staircase function.


\subsection{Ensemble forecasting through convexification and calibration}
\label{subsec:multiple}

We now compare multiple TIPs through their (convexified)  IS--ROC Curves.
If one curve globally dominates all the others, the procedures described in Sections~\ref{subsec:dominant} and~\ref{subsec:convexification} apply directly.

Otherwise, two or more IS--ROC Curves intersect.
In this situation, a new convex hull is computed over the union of the convexified IS--ROC curves.
The resulting hull defines the global efficient frontier.

Since each convexified curve already inherits the calibration of the corresponding forecaster, only the newly created convexification segments will undergo the step calibration described in Subsection \ref{subsec:convexification}.

A major advantage of the proposed methodology is its computational efficiency. It suffices the ability to interrogate an ensemble of TIPs, without any knowledge of their internal algorithms. Indeed, for any given value of the Interval Score parameter $\alpha$, only the comparison of the IS--ROC curves is needed in order to decide which of the competing TIPs should be used to provide the interval prediction. In turn, this comparison is equivalent to selecting the TIP minimizing $IS(\alpha,g(\alpha))$.

It is desirable that an ensemble TIP produces predicted intervals that meet the nesting property defined in Subsection \ref{subsec:definition}. Unfortunately, convexification may lead to violation of this property, although an isotonic regression step may be introduced in order to restore nesting.


\section{User guide: How to construct and use the IS--ROC Curve}
\label{sec:userguide}

This section describes the practical construction of the IS--ROC Curve from a dataset and illustrates the proposed comparison procedure.

Suppose that a dataset $\{(x_i,y_i)\}_{i=1, \ldots, n}$ is available, where the pair $(x_i,y_i)$ denotes a realization of the joint random variables $(X, Y)$.
Based on the observation \(x_t\), a probabilistic forecaster produces the predictive distribution

\[
F_t(y\mid x)
=
\Pr(Y_t\le y\mid X_t=x_t).
\]

Now, let us consider a TIP, that is predictor with a tunable tightness parameter $\beta$. Fix a tightness level \(\beta\in(0,1]\).
For every observation \((x_i,y_i)\), the
predictor associates one point in the \((S,MAD)\)-plane, whose coordinates are given by
\begin{align}
\mathrm{S}_\beta^{P}(x_i)
&=
u_\beta(x_i)-l_\beta(x_i),\\
\AD_\beta^{P}(x_i,y_i)
&=
\left(
y_i-u_\beta(x_i)
\right)_+
+
\left(
l_\beta(x_i)-y_i
\right)_+.
\end{align}
Therefore, considering all observations in the dataset, a scatter plot is obtained. The point of the IS--ROC Curve corresponding to the selected value of \(\beta\) is defined as the mean of the scatter plot and its coordinates can be estimated by the sample means:
\begin{align}
\MS^{P}(\beta)
&=
\mathbb{E}_X
\left[
\mathrm{S}_\beta^{P}(X)
\right]
\approx
\frac1n
\sum_{i=1}^{n}
\mathrm{S}_\beta^{P}(x_i),
\\
\MAD^{P}(\beta)
&=
\mathbb{E}_{X,Y}
\left[
\AD_\beta^{P}(X,Y)
\right]
\approx
\frac1n
\sum_{i=1}^{n}
\AD_\beta^{P}(x_i,y_i).
\end{align}
Allowing the tightness parameter to vary over the interval $\beta\in(0,1]$ yields an estimate of the complete IS--ROC Curve.
Indeed, as the tightness $\beta$ is changed, the corresponding scatter plot moves on the \((S,MAD)\)-plane, while the trajectory of its mean point draws the IS--ROC Curve.

In the covariate-free case, the sharpness coordinate becomes deterministic because it depends only on $\beta$, whereas the Absolute Distance continues to depend on the observed responses \(Y_t\).

\subsection*{Operational procedure}

\begin{enumerate}

\item
For each  TIP \(P\), compute

\[
\left(
\MS^{P}(\beta),
\MAD^{P}(\beta)
\right),
\qquad
\beta\in(0,1],
\]

thus obtaining its IS--ROC Curve.

\item
Compare the resulting curves.

Exactly one of the following situations occurs:

\begin{enumerate}
    \item a dominant IS--ROC Curve exists and is convex;
    \item a dominant curve exists but is not convex;
    \item no dominant curve exists.
\end{enumerate}

\item
Proceed according to the identified scenario.

\begin{itemize}

\item
In Case (a), directly apply the calibration procedure [\ref{subsec:dominant}].
\item
In Case (b), compute the convex hull of the dominant curve and then apply the calibration procedure [\ref{subsec:convexification}].

\item
In Case (c), convexify each individual curve whenever necessary, construct the global convex hull, and finally apply the step calibration procedure [\ref{subsec:multiple}].

\end{itemize}

\end{enumerate}

For the computation of convex hulls, we recommend Andrew's monotone chain algorithm, which efficiently extracts the vertices of the convex envelope from the sampled points of the IS--ROC Curves.

\section{Numerical Examples}
\label{sec:results}

This section illustrates the proposed methodology through two synthetic examples.

The first example highlights the notions of dominance, non-strict dominance, and calibration. The second example considers intersecting IS--ROC Curves and illustrates the convexification and calibration procedure.

\subsection{Example 1: Three forecasters sharing the same median}
\label{subsec:example1}

We first consider the case without covariates.
The DGP is assumed to be the standard Gaussian distribution $G=\mathcal N(0,1)$, and three probabilistic forecasters are compared.

The first forecaster is the oracle, i.e., $F_1=G$.
The second forecaster is a Laplace distribution centered at the origin, i.e., $F_2=\mathrm{Laplace}(0,1)$. The third forecaster is a lognormal distribution, $F_3=\mathrm{LogNormal}(-0.2,1)$.

The purpose of this example is to illustrate three different phenomena. First, the IS--ROC Curve associated to the Gaussian forecaster dominates the lognormal one. Second, the Gaussian and Laplace forecasters generate identical IS--ROC Curves, providing an example of non-strict dominance. Finally, the calibration procedure (see Section \ref{subsec:dominant}) applied to $F_2$ allows the original predictive distribution $G$ to be reconstructed.

\begin{figure}[H]

\centering

\includegraphics[width=0.8\textwidth]{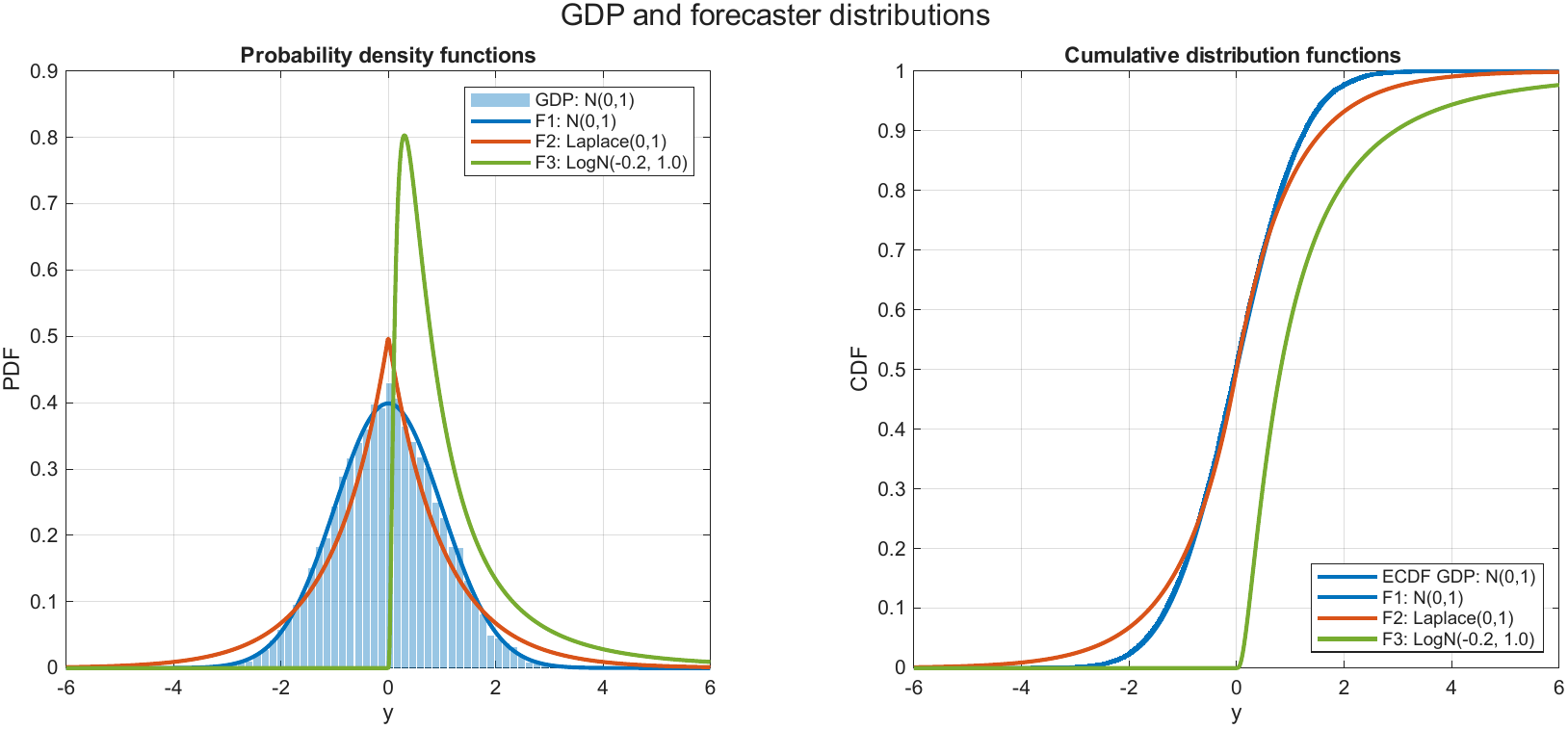}

\caption{Example 1. Comparison of the DGP and the three probabilistic forecasters.
The left panel reports the probability density functions, whereas the right panel shows the corresponding cumulative distribution functions.
}

\label{fig:example1_pdf}

\end{figure}

\begin{figure}[H]

\centering

\includegraphics[width=0.8\textwidth]{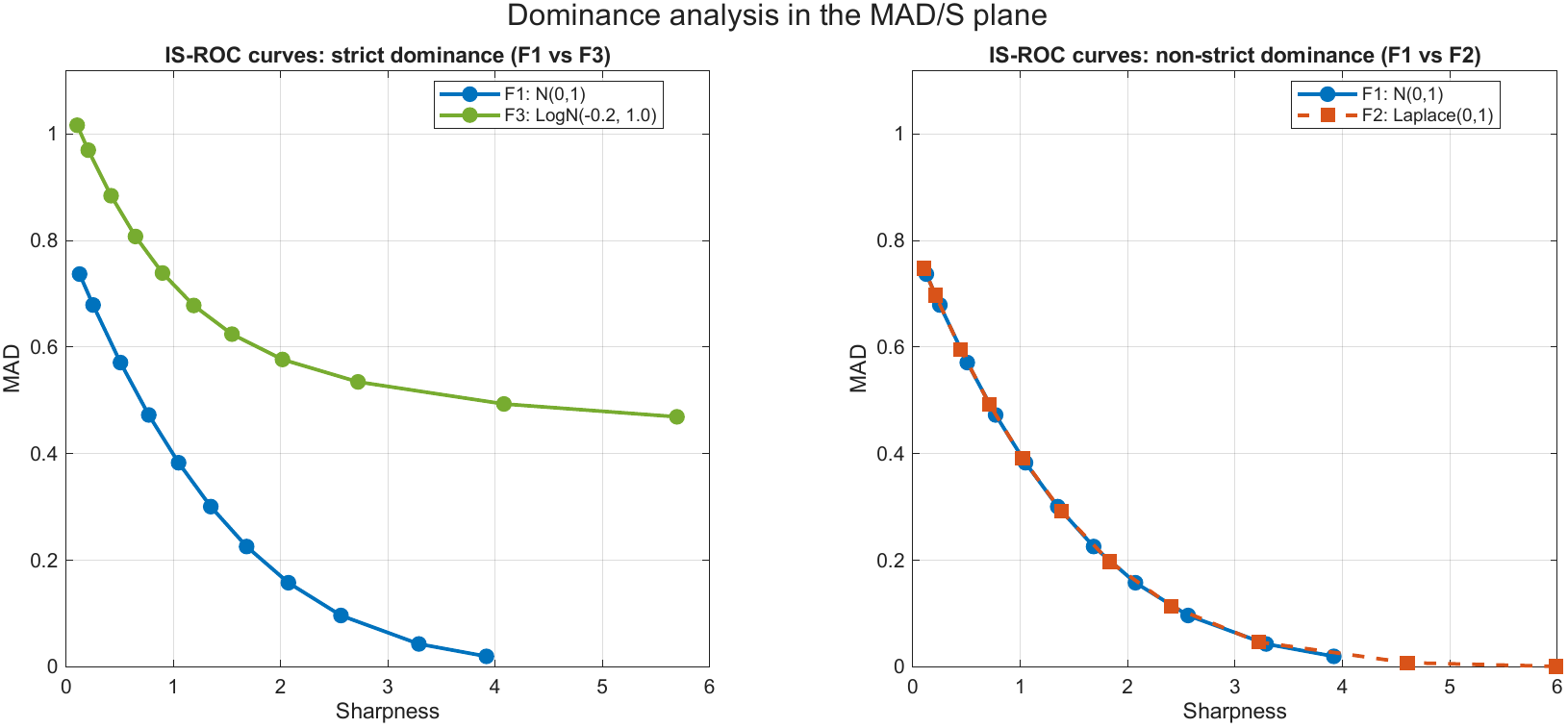}

\caption{
Example 1. Comparison of the IS--ROC Curves.
The left panel compares the Gaussian $(F_1)$ and lognormal $(F_3)$ forecasters, highlighting the dominance of the Gaussian predictor.
The right panel compares the Gaussian $(F_1)$ and Laplace $(F_2)$ forecasters, showing that the two IS--ROC Curves perfectly overlap. As stated in Theorem \ref{thm:nonuniqueness}, this result is a direct consequence of the symmetry of the two predictive distributions and their common median. The difference between the two curves consists of the different parametrization by the tightness parameter. The calibration procedure will directly address this issue.
}

\label{fig:example1_isroc}

\end{figure}

\begin{figure}[H]

\centering

\includegraphics[width=\textwidth]{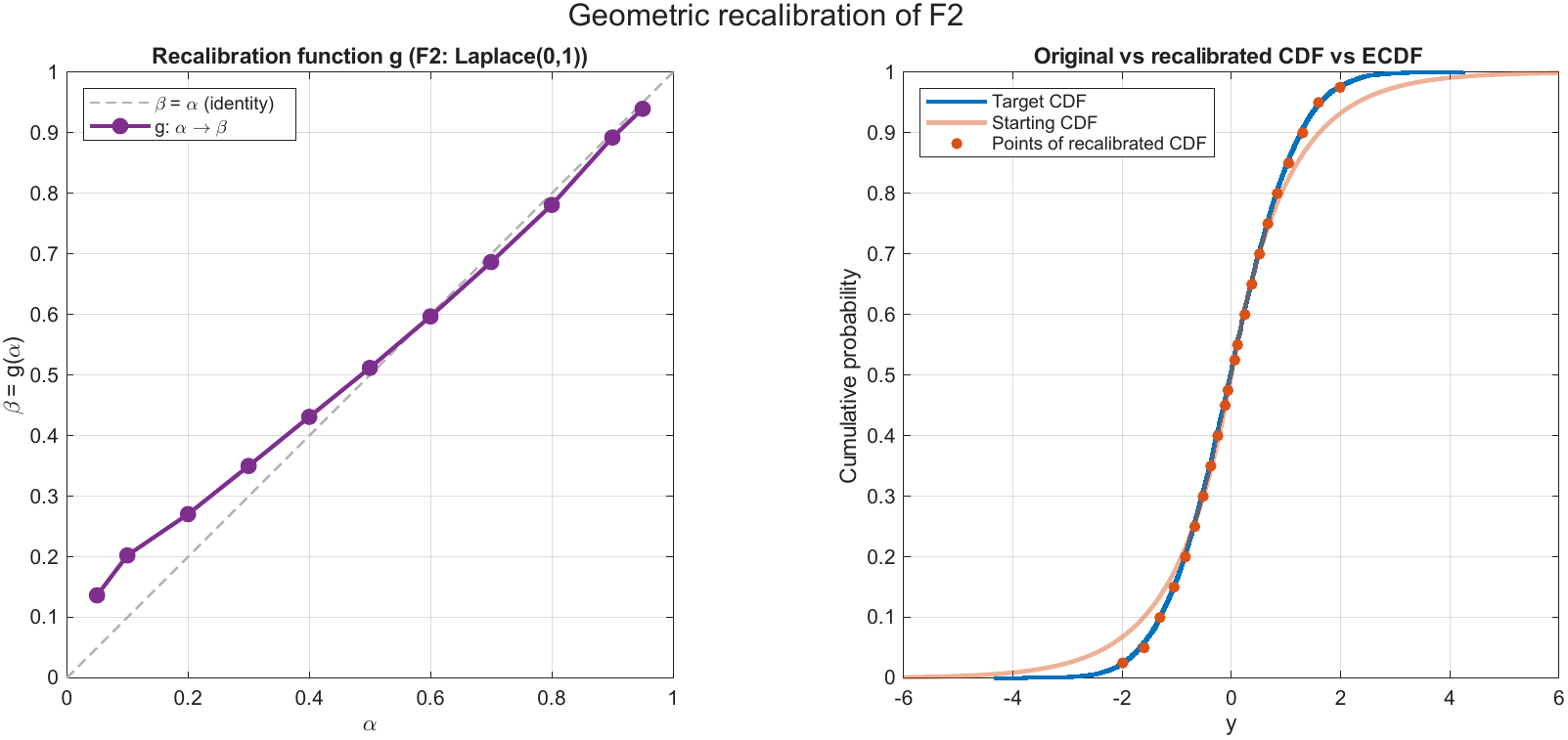}

\caption{
Example 1. The left panel illustrates the calibration function, which maps each desired coverage level $\alpha$ to the corresponding tightness level of the Laplace forecaster $(F_2)$ to be used in practice. The right panel reports the resulting calibrated cumulative distribution functions. As can be seen, the calibrated Laplace predictive distribution (red) closely overlaps the true data-generating distribution (blue).
}

\label{fig:example1_calibration}

\end{figure}

The first comparison illustrates the Pareto dominance of the Gaussian forecaster over the lognormal predictor.
In contrast, the second comparison shows that two different predictive distributions may generate exactly the same IS--ROC Curve.
This confirms that the IS--ROC Curve is not, in general, sufficient to uniquely identify the underlying predictive distribution.
Nevertheless, once the calibration function has been computed, the original predictive distribution can be recovered through Equation (\ref{eq:calibrated_tau}).

\subsection{Example 2: Convex hull and ensembling of IS--ROC Curves}
\label{subsec:example2}

The second example illustrates the situation in which no globally dominant forecaster exists.
We consider a binary covariate
$Z\sim\mathrm{Bernoulli}(0.5)$, and define the DGP as

\[
G=
\begin{cases}
F_1,& Z=0,\\
F_2,& Z=1,
\end{cases}
\]
where
$F_1=\mathrm{Laplace}(0.3,0.6)$,
and
$F_2=\mathcal N(-0.3,2)$. We hence compare three forecasters: $F_1, F_2$ and the oracle (unknown in practical cases). The IS--ROC Curves corresponding to $F_1$ and $F_2$ intersect and no predictor globally dominates the others.
Consequently, the convexification procedure described in Section~\ref{subsec:multiple} is required.

\begin{figure}[H]

\centering

\includegraphics[width=\textwidth]{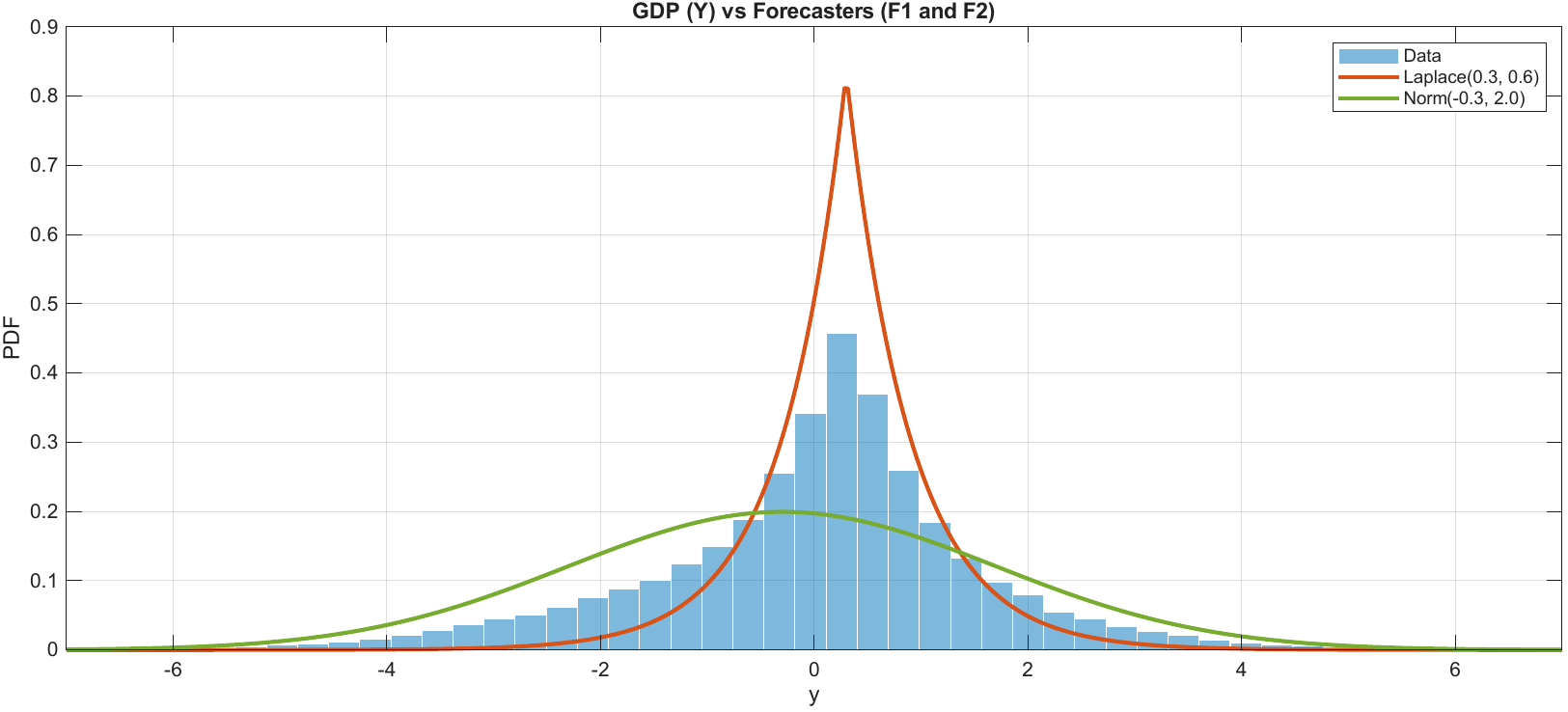}

\caption{
Example 2. Probability density functions corresponding to $F_1$, $F_2$ and the marginal of the oracle, here represented by an histogram.
}

\label{fig:example2_pdf}

\end{figure}

\begin{figure}[H]

\centering

\includegraphics[width=\textwidth]{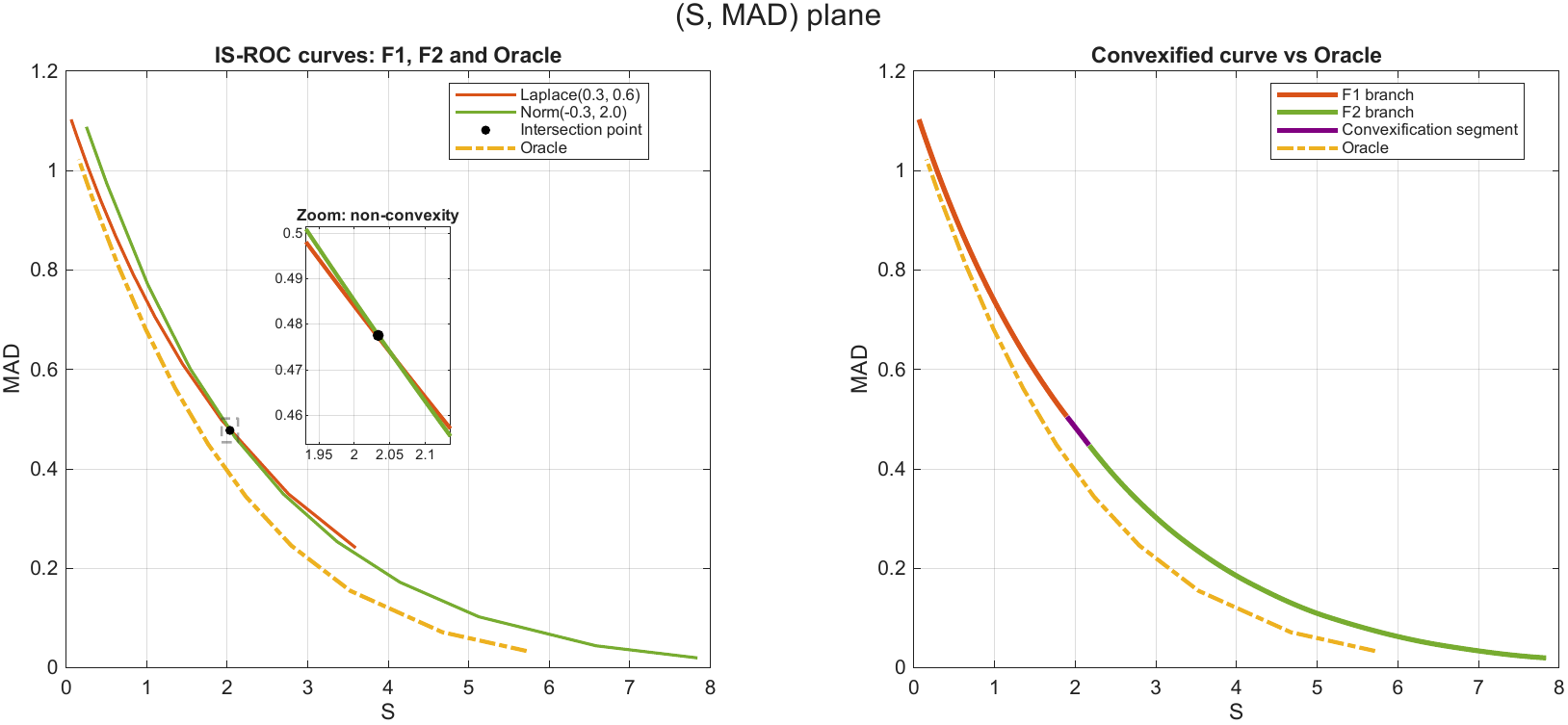}

\caption{Example 2. IS--ROC Comparison. The left panel reports the original IS--ROC Curves: while $F_1$ achieves lower misspecification for small sharpness values, corresponding to prediction intervals centered around the median of the DGP, $F_2$ becomes preferable for larger sharpness values. The intersection of the two curves is marked by the black dot. As a result, the lower envelope is composed of two distinct branches and is globally non-convex. Applying the convexification procedure yields the right panel, where the convexifying segment is shown in purple.
}

\label{fig:example2_convex}

\end{figure}

\begin{figure}[H]

\centering

\includegraphics[width=\textwidth]{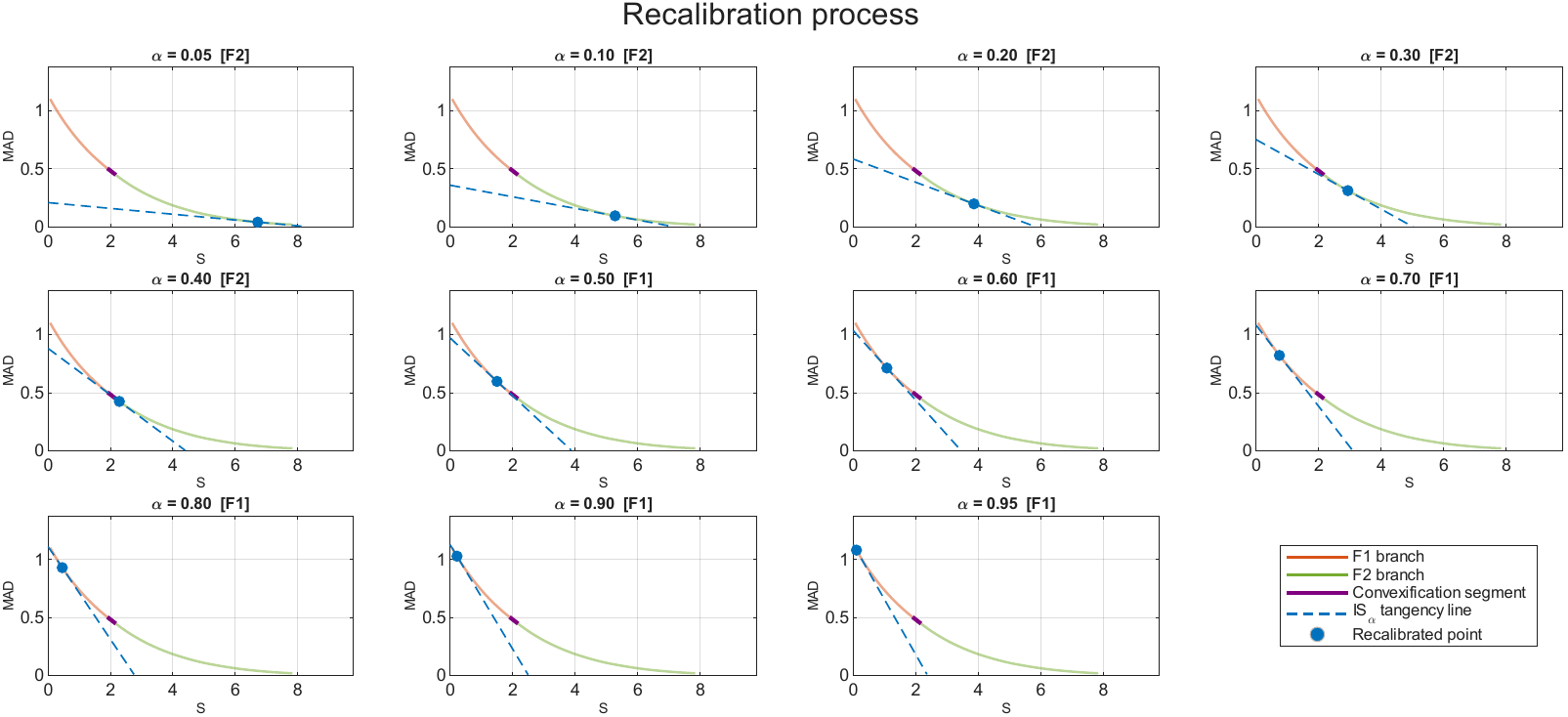}

\caption{Example 2. Calibration process on the MAD--S plane for intersecting IS--ROC curves. Each panel corresponds to a value of the Interval Score parameter $\alpha$ and shows the supporting line of slope $-\alpha/2$ tangent to the global efficient frontier. The original branches inherit their own calibration. For each panel, the marker indicates the calibrated point $\beta = g(\alpha)$ that minimizes $\mathrm{IS}(\alpha,\beta)$ along the frontier, alternating between the $F_1$ and the $F_2$ branches.}

\label{fig:example2_calibration}

\end{figure}

\begin{figure}[H]

\centering

\includegraphics[width=\textwidth]{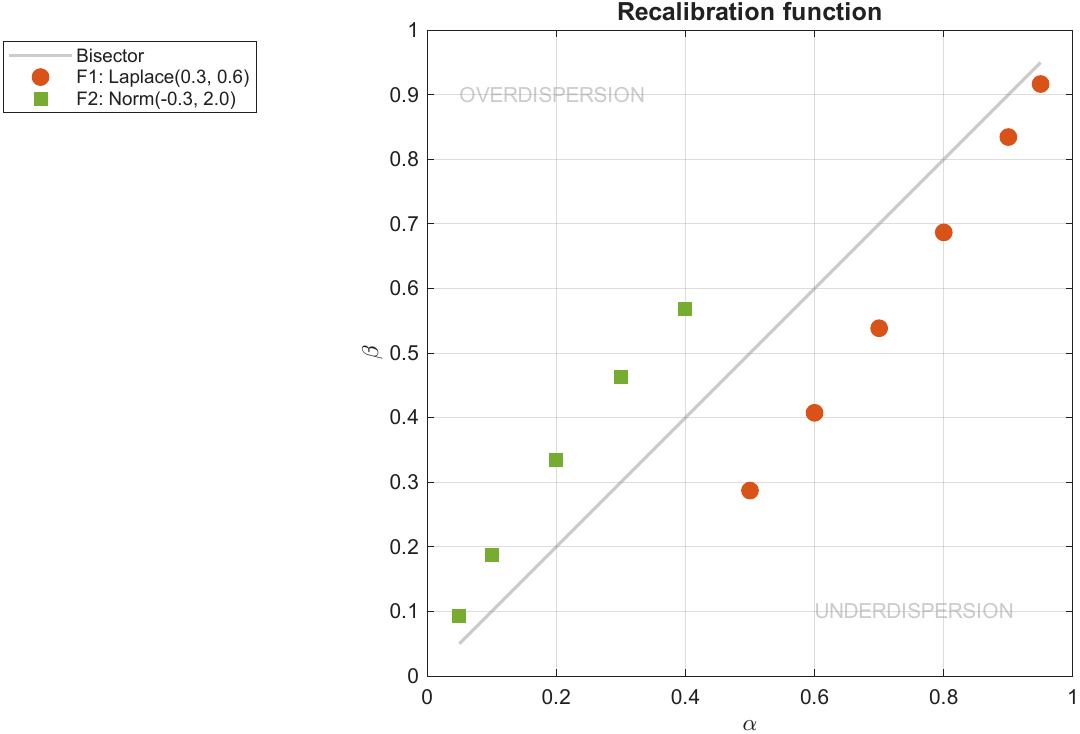}

\caption{
Example 2. Step calibration function associated with the convexified frontier.
Points belonging to different IS--ROC Curves are treated independently.
}

\label{fig:example2_step}

\end{figure}

\begin{figure}[H]

\centering

\includegraphics[width=\textwidth]{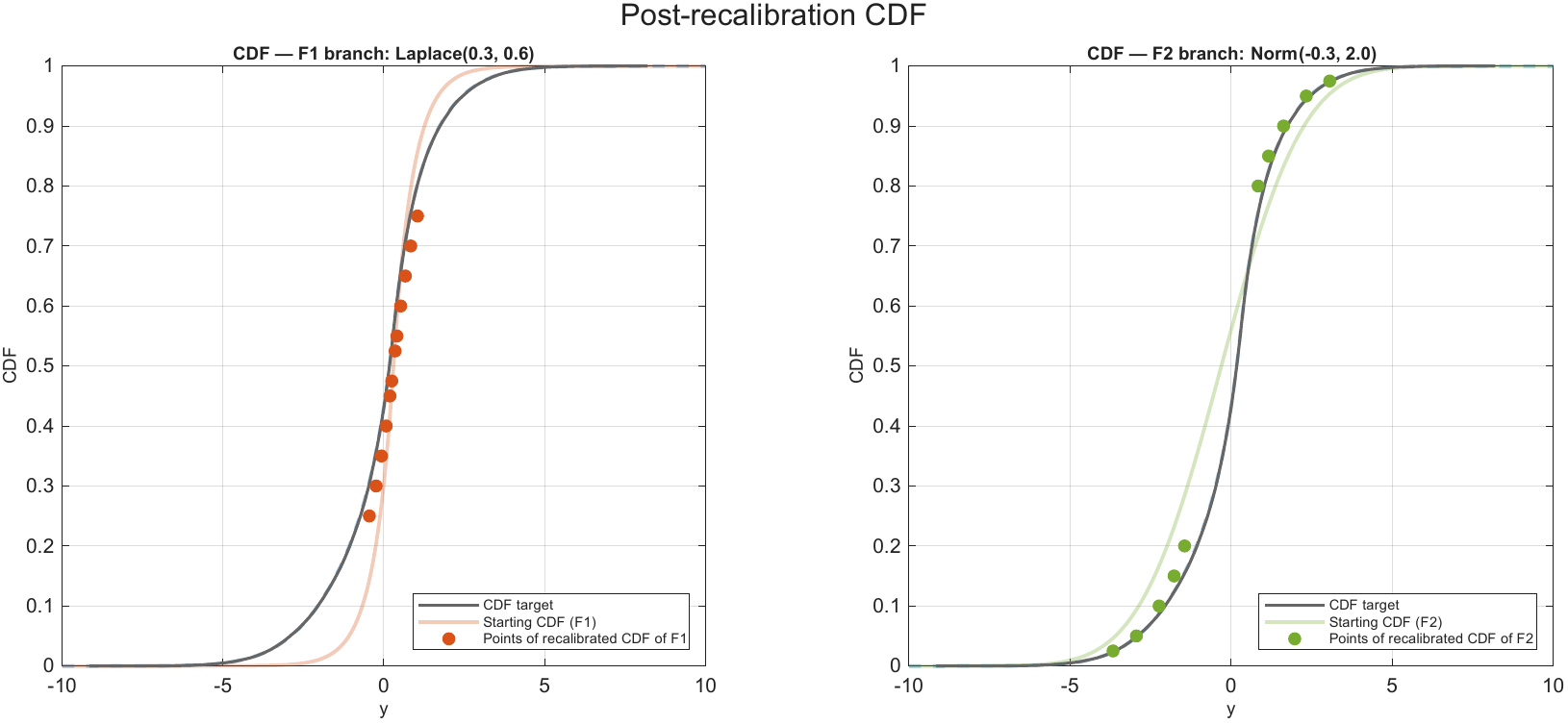}

\caption{
Example 2. Calibrated cumulative distribution functions corresponding to the two conditional cases identified by the covariate.
}

\label{fig:example2_cdf}

\end{figure}

The convexified forecaster improves the frontier of the two competing predictors.
The subsequent calibration procedure preserves this frontier while restoring the optimal correspondence between the Interval Score parameter and the interval tightness.










\section{A roadmap for the design of probabilistic predictors}
The IS--ROC Curve is intended not only as an evaluation tool, but also as a framework for guiding the entire development of probabilistic forecasting models. It naturally supports an iterative workflow in which progressively richer predictors are constructed, assessed, calibrated, and eventually combined.

A natural starting point consists of covariate-free probabilistic predictors, namely TIPs describing only the marginal distribution of the response variable. Their IS--ROC Curves provide a baseline against which all subsequent models can be compared. As additional covariates are introduced, the corresponding IS--ROC Curves immediately reveal whether the increased model complexity translates into a genuine improvement of the sharpness--accuracy trade-off.

Once several candidate predictors have been developed, the methodology proposed in this paper provides a principled strategy for their refinement. Each predictor can first be recalibrated through the geometric calibration procedure described in Section~\ref{sec:calibration}. Subsequently, whenever no single predictor globally dominates the others, convexification naturally leads to the construction of an ensemble TIP that combines the most informative operating regions of the competing models.

An interesting extension, which is left for future investigation, is the analysis of \emph{conditional} IS--ROC Curves. Rather than evaluating predictive performance over the entire population, one could study the IS--ROC Curve conditional on one or more covariates. For instance, if a binary covariate such as sex is available, two conditional IS--ROC Curves could be constructed, one for each subgroup. Calibration, convexification, and ensembling could then be performed independently within each conditional probability space before combining the resulting predictors. Such a conditional approach has the potential to further improve predictive performance by adapting the forecasting strategy to different data regimes.


\appendix

\section*{Appendix}


\section{Convexification Lemma}
\label{app:convexification}

This appendix provides the technical result underlying the convexification procedure introduced in Section~\ref{subsec:convexification}.

The lemma shows that every point belonging to a convexification segment can be associated to an interval forecasting strategy. More precisely, every point of the segment corresponds to a randomized interval forecaster and therefore represents an admissible operating point in the \((S,MAD)\)-plane.

\begin{lemma}
\label{lem:convexification}
Consider a data generating process DGP.
Let $P^A$ and $P^B$ be two fixed interval forecasters whose average performance yields the points 
\[
A=(\mathrm{MS}^{A},\MAD^{A}),
\qquad
B=(\mathrm{MS}^{B},\MAD^{B})
\]
in the \((\MS,\MAD)\)-plane.
Then, for every $t\in[0,1]$, there exists an interval forecaster \(P_t\) whose performance point is
\[
(\mathrm{MS}(t),\MAD(t))
=
(1-t)A+tB,
\]
that is,
\[
\mathrm{MS}(t)
=
(1-t)\mathrm{MS}^{A}
+
t\mathrm{MS}^{B},
\]
\[
\MAD(t)
=
(1-t)\MAD^{A}
+
t\MAD^{B}.
\]
\end{lemma}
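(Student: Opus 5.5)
The plan is to build $P_t$ as a randomized mixture of $P^A$ and $P^B$ and then use linearity of expectation. Introduce an auxiliary random variable $W\sim\mathrm{Bernoulli}(t)$, independent of $(X,Y)$. For a given covariate value $x$, the randomized forecaster $P_t$ issues the interval $P^B(x)=[l^B(x),u^B(x)]$ when $W=1$ and the interval $P^A(x)=[l^A(x),u^A(x)]$ when $W=0$. Equivalently, in the sample version, each observation $(x_i,y_i)$ is assigned to $B$ with probability $t$. Both $\MS$ and $\MAD$ are expectations, and a mixture of two strategies with fixed weights should average them. This is the same device that justifies convex hulls in classical ROC analysis (cf.\ \cite{provost1997analysis}).

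The computation proceeds in two steps. First, by the tower property conditioned on $W$, and using independence of $W$ from $(X,Y)$,
\[
\MS(t)=\mathbb{E}_{W}\,\mathbb{E}_X\bigl[S^{P_t}(X)\mid W\bigr]=(1-t)\,\mathbb{E}_X[S^{A}(X)]+t\,\mathbb{E}_X[S^{B}(X)]=(1-t)\MS^A+t\MS^B .
\]
Second, the same argument applied to $\AD$ gives
\[
\MAD(t)=(1-t)\,\mathbb{E}_{X,Y}[\AD^A(X,Y)]+t\,\mathbb{E}_{X,Y}[\AD^B(X,Y)].
\]
Independence is what lets the conditional law of $(X,Y)$ given $W$ equal the DGP law. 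The endpoints $t=0$ and $t=1$ then recover $P^A$ and $P^B$ trivially.

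The main point of care is admissibility: $P_t$ must count as an "interval forecaster" in the sense used by Theorem~\ref{thm:convexity_general}. Here I would argue that an external randomization device is allowed, because the forecaster still outputs a real interval for each $x$.

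If a deterministic forecaster is required instead, I would derandomize by splitting the covariate space. Choose a measurable set $C\subseteq\mathcal X$ with $\Pr(X\in C)=t$ and use $P^B$ on $C$ and $P^A$ off $C$. For the averages to still mix linearly, $C$ must be selected so that $\mathbb{E}[S^B-S^A;\,X\in C]=t\,\mathbb{E}[S^B-S^A]$, and similarly for $\AD$. The cleanest way to get this is to augment $X$ with an independent uniform coordinate and take $C$ to depend only on that coordinate, which reduces to the randomized construction above.

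I would also note that nesting in $\beta$ is not required for a single fixed pair $(P^A,P^B)$. When the lemma is applied along a whole segment, the result is a family in $t$ whose nesting may fail, consistent with the paper's remark about isotonic repair.
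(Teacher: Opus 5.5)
Your proof is correct and follows essentially the same route as the paper. Both randomize between $P^A$ and $P^B$ with a Bernoulli$(t)$ variable and then apply the law of total expectation separately to $\MS$ and $\MAD$. You state explicitly that $W$ is independent of $(X,Y)$, which is exactly what justifies the paper's tacit step $\mathbb{E}[\mathrm{S}(P_t)\mid Z=0]=\MS^{A}$; this is a useful clarification rather than a departure.
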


\begin{proof}

Fix \(t\in[0,1]\), and let

\[
Z\sim\mathrm{Bernoulli}(t).
\]

Define the randomized interval forecaster

\[
P_t=
\begin{cases}
P^{A}, & Z=0,\\[1ex]
P^{B}, & Z=1.
\end{cases}
\]

We first compute the sharpness coordinate.

By the law of total expectation,

\[
\begin{aligned}
\MS(t)
&=
\mathbb{E}\!\left[\mathrm{S}(P_t)\right] \\
&=
\mathbb{E}\!\left[\mathrm{S}(P_t)\mid Z=0\right]\Pr(Z=0)
+
\mathbb{E}\!\left[\mathrm{S}(P_t)\mid Z=1\right]\Pr(Z=1)\\
&=
(1-t)\mathbb{E}\!\left[\mathrm{S}(F_\beta^{A})\right]
+
t\mathbb{E}\!\left[\mathrm{S}(F_\beta^{B})\right]\\
&=
(1-t)\MS^{A}
+
t\MS^{B}.
\end{aligned}
\]

The argument for the Mean Absolute Distance is identical:
\[
\begin{aligned}
\MAD(t)
&=
\mathbb{E}\!\left[\MAD(P_t)\right]\\
&=
\mathbb{E}\!\left[\MAD(P_t)\mid Z=0\right]\Pr(Z=0)
+
\mathbb{E}\!\left[\MAD(P_t)\mid Z=1\right]\Pr(Z=1)\\
&=
(1-t)\mathbb{E}\!\left[\MAD(F_\beta^{A})\right]
+
t\mathbb{E}\!\left[\MAD(F_\beta^{B})\right]\\
&=
(1-t)\MAD^{A}
+
t\MAD^{B}.
\end{aligned}
\]
\end{proof}


\section{Additional Results}
\label{app:additional}

This appendix collects several additional analytical results and examples illustrating the properties of the proposed IS--ROC framework.

\subsection{Piece-wise Linear IS--ROC Curves}
\label{app:linear}

An interesting special case is that of a piecewise linear IS--ROC Curve, which may arise for discrete DGP.
Consider the random variable

\[
Y=
\begin{cases}
-4, & \text{with probability }0.30,\\
-2, & \text{with probability }0.10,\\
0,  & \text{with probability }0.40,\\
2,  & \text{with probability }0.20.
\end{cases}
\]

Figure~\ref{fig:linear} reports the corresponding probability distribution together with the resulting IS--ROC Curve.

\begin{figure}[H]
    \centering
    \includegraphics[width=\textwidth]{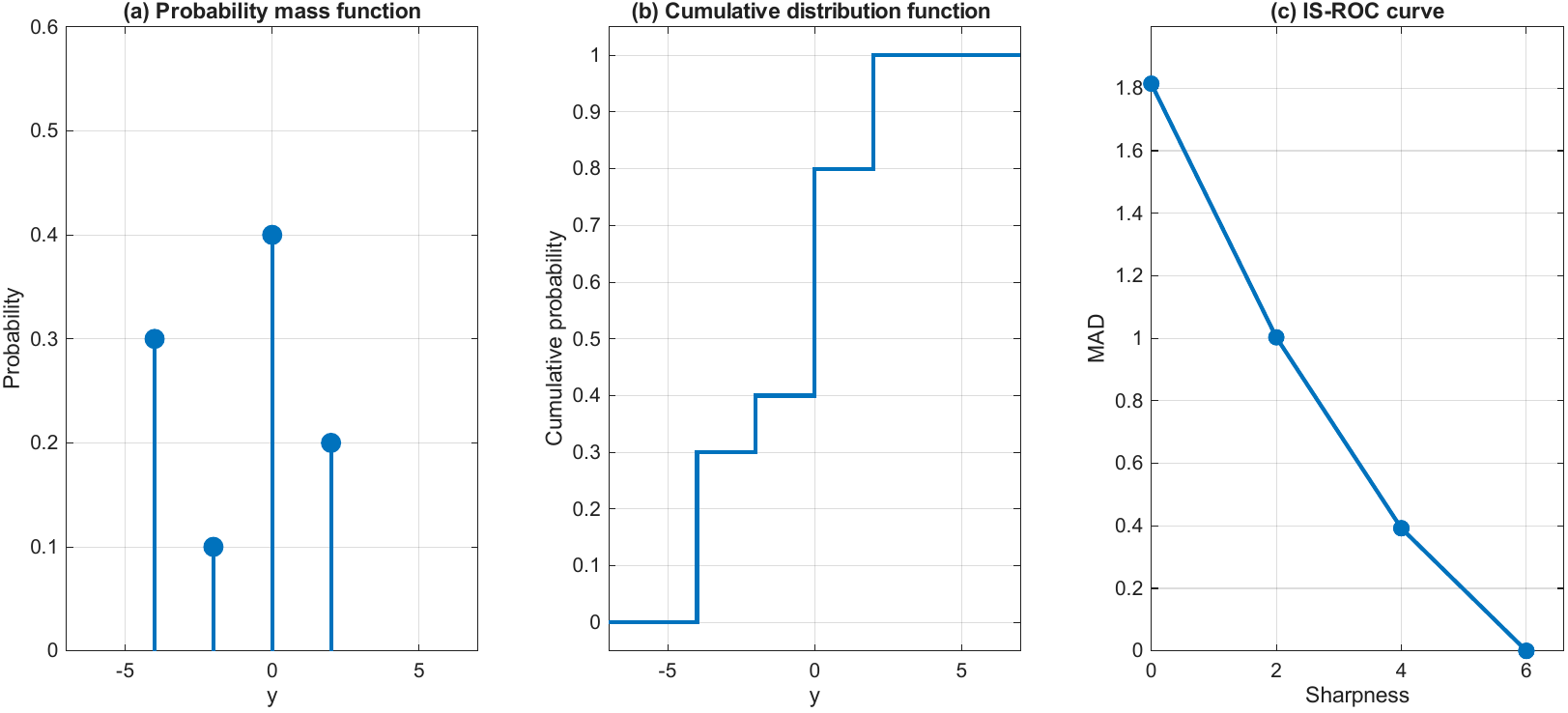}
    \caption{
    Example of a linear IS--ROC Curve.
    The left panel shows the probability distribution, whereas the right panel reports the corresponding IS--ROC Curve.
    }
    \label{fig:linear}
\end{figure}



\subsection{Quasi-Convexity of the IS--ROC Curves}
\label{app:monotonicity}

By assumption, for every TIP $P$, the sharpness is a monotone function of the tightness parameter.
More precisely,
\[
\beta_1<\beta_2
\quad\Longrightarrow\quad
\mathrm{S}(\beta_1)\ge\mathrm{S}(\beta_2).
\]
Conversely, enlarging the prediction interval can only decrease the Absolute Distance:
\[
\beta_1<\beta_2
\quad\Longrightarrow\quad
\MAD(\beta_1)\le\MAD(\beta_2).
\]
Since averaging preserves monotonicity, we have that any IS--ROC Curve is monotone nonincreasing in the \((S,MAD)\)-plane.
In turn, the monotonicity of $\mathrm{S}\mapsto\MAD$ immediately implies that every IS–-ROC Curve is quasi-convex.

An illustrative example is obtained by considering a synthetic power generation process modeled as a trimodal mixture of Beta distributions. The three components represent low-, medium-, and high-wind regimes, with mixing probabilities $0.30$, $0.40$, and $0.30$, respectively. The corresponding Beta distributions are $Beta(1.2,22.8)$, $Beta(12,12)$, and $Beta(22.8,1.2)$, yielding three symmetric regimes centered around low, medium, and high power production. The true predictor correctly identifies the generating regime and therefore issues forecasts based on the corresponding Beta distribution. In contrast, the misspecified predictor correctly recognizes the two extreme regimes but cannot distinguish the intermediate one. Whenever the medium-wind regime occurs, it instead predicts using the unconditional mixture of the three Beta components, thereby increasing predictive uncertainty while preserving calibration at the population level.

\begin{figure}[H]
    \centering
    \includegraphics[width=0.7\textwidth]{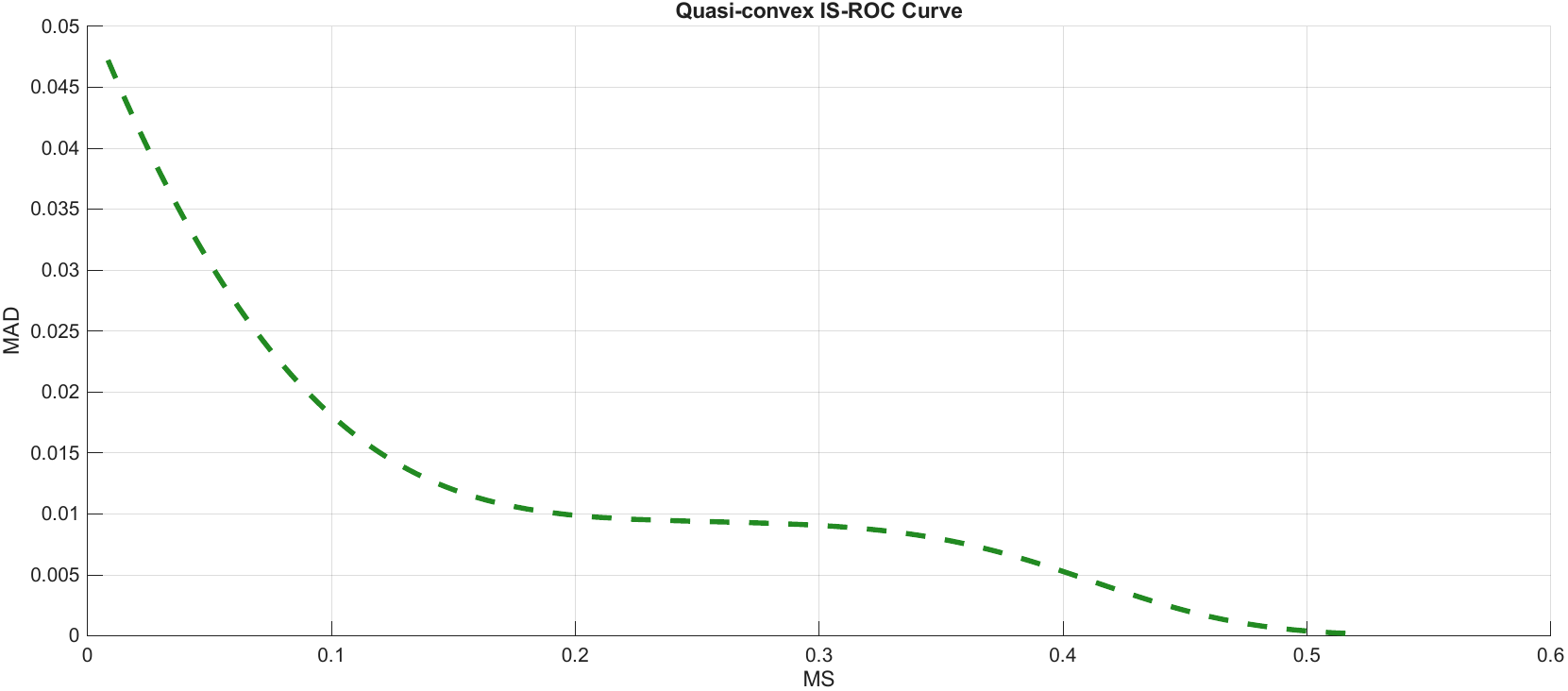}
    \caption{
     The quasi-convex IS–ROC Curve associated to the misspecified predictors is shown.
    }
    \label{fig:quasiconvex}
\end{figure}

\subsection{Sharpness curve does not identify the distribution}
\label{app:sharpness}

Sharpness alone is generally insufficient to identify a predictive distribution.

Indeed, different predictive distributions may generate exactly the same mean sharpness function
$\beta
\longmapsto
\MS(\beta),
$
while producing different Mean Absolute Distance functions.

An immediate example is provided by Gaussian distributions having identical variance but different medians. Since the interval width depends only on the variance, all such distributions generate the same sharpness curve.

A more interesting question concerns if two distinct distributions with the same median can generate the same sharpness diagram.
Figure~\ref{fig:sharpness} reports an illustrative example.

\begin{figure}[H]
    \centering
    \includegraphics[width=0.8\textwidth]{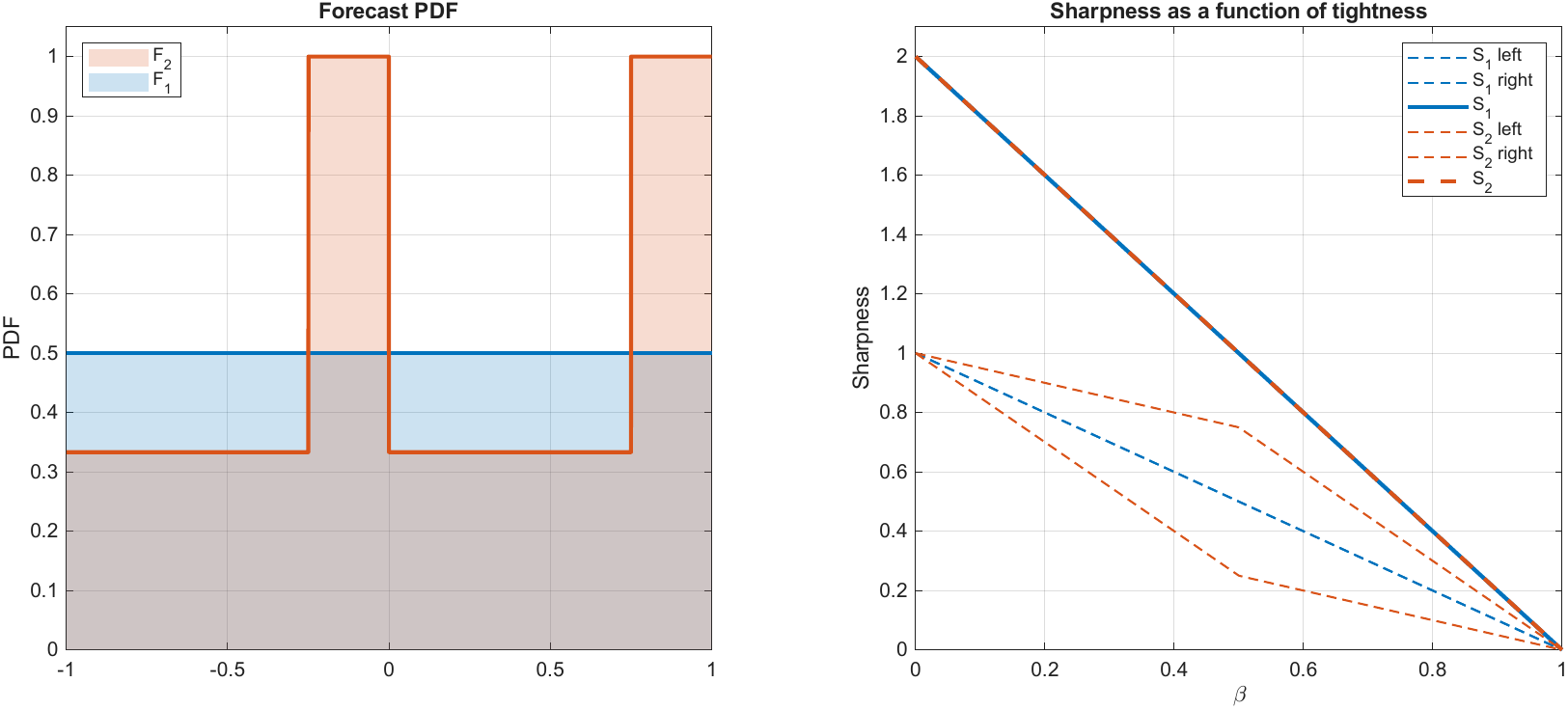}
    \caption{
    Two predictive distributions with the same median producing identical sharpness diagrams. The left panel shows the predictive densities $F1$ (blue) and $F2$ (orange). The right panel reports the corresponding sharpness functions as a function of the tightness parameter $\beta$. Solid lines denote the overall sharpness, which is identical for the two forecasts. Dashed lines represent the contributions of the left and right semi-widths, revealing that the two predictive distributions allocate uncertainty differently on the two sides of the median despite having the same total sharpness.
    }
    \label{fig:sharpness}
\end{figure}


\subsection{Necessary condition to produce the same IS--ROC Curve}
\label{app:median}

Although different predictive distributions may generate the same IS--ROC Curve, a necessary condition is that they share the same median.
Indeed, the prediction intervals considered throughout this work are centered around the predictive median.
Consequently, if two IS--ROC Curves coincide for every value of the tightness parameter, then $\mathrm{S}(1)=0$ corresponds to the same prediction point, implying coincidence of the corresponding medians.

\subsection{Closed-form expressions}
\label{app:closedforms}

While one can always compute
$\mathrm{S}(\beta)$ and $\MAD(\beta),$
obtaining an explicit parametric representation of the IS--ROC Curve, the direct representation $\mathrm{S}\mapsto \MAD(s)$ is trickier to find.
Whenever the mapping $\beta\longmapsto S(\beta)$
is invertible, one also obtains the closed-form relation between sharpness and $\MAD$.
The following tables summarize the analytical expressions for the distributions considered in this work.

\begin{landscape}

\begin{table}[H]
\centering
\setlength{\tabcolsep}{6pt}

\begin{tabular}{lcc}
\toprule
\textbf{Distribution} 
& \textbf{Sharpness } $s(\beta)$ 
& $\mathbf{MAD}_Y^F(\beta)$ \\
\midrule

Normal $\mathcal{N}(m,\sigma^2)$ 
&
$2\sigma\Phi^{-1}(1-\beta/2)$
&
$(m+\sigma z_{\beta/2}-Y)_+
+(Y-m-\sigma z_{\beta/2})_+$ 
\\[0.3cm]

Lognormal $\mathrm{LN}(m,\sigma)$ 
&
$e^m(e^{-\sigma z}-e^{\sigma z})$, 
$z=\Phi^{-1}(\beta/2)$
&
$(e^{m+\sigma z}-Y)_+
+(Y-e^{m-\sigma z})_+$
\\[0.3cm]

Exponential $\mathrm{Exp}(\lambda)$
&
$\frac{1}{\lambda}
\log\frac{1-\beta/2}{\beta/2}$
&
$\left(-\frac1\lambda\log(1-\beta/2)-Y\right)_+
+
\left(Y+\frac1\lambda\log(\beta/2)\right)_+
$
\\[0.3cm]

Uniform $\mathrm{Unif}(a,b)$
&
$(1-\beta)(b-a)$
&
$\left(a+\frac{\beta}{2}(b-a)-Y\right)_+
+
\left(Y-a-(1-\frac{\beta}{2})(b-a)\right)_+
$

\\

\bottomrule
\end{tabular}

\caption{Sharpness and MAD expressions in function of tightness for selected forecast distributions.}
\label{tab:sharpness_mad}

\end{table}

\begin{table}[H]
\centering
\small
\setlength{\tabcolsep}{6pt}

\begin{tabular}{lcc}
\toprule
\textbf{Distribution}
& \textbf{Inverse } $\beta(s)$
& $\mathbf{MAD}_Y^F(s)$
\\
\midrule

Normal $\mathcal{N}(m,\sigma^2)$
&
$2\Phi(-s/(2\sigma))$
&
$\left(m-\frac{s}{2}-Y\right)_+
+
\left(Y-m-\frac{s}{2}\right)_+
$
\\[0.3cm]

Lognormal $\mathrm{LN}(m,\sigma)$
&
$2\Phi
\left(
\frac1\sigma
\operatorname{arsinh}
(-\frac{s}{2e^m})
\right)$
&
$(l(s)-Y)_+
+(Y-u(s))_+$
\\[0.2cm]

&
&
$l(s)=e^m
\left(
-\frac{s}{2e^m}
+\sqrt{1+(\frac{s}{2e^m})^2}
\right)$
\\

&
&
$u(s)=e^m
\left(
\frac{s}{2e^m}
+\sqrt{1+(\frac{s}{2e^m})^2}
\right)$
\\[0.3cm]

Exponential $\mathrm{Exp}(\lambda)$
&
$\frac{2}{1+e^{\lambda s}}$
&
$\left(
-s+\frac1\lambda\log(1+e^{\lambda s})-Y
\right)_+
+
\left(
Y-\frac1\lambda\log(1+e^{\lambda s})
\right)_+
$
\\[0.3cm]

Uniform $\mathrm{Unif}(a,b)$
&
$1-\frac{s}{b-a}$
&
$\left(
\frac{a+b}{2}-\frac{s}{2}-Y
\right)_+
+
\left(
Y-\frac{a+b}{2}-\frac{s}{2}
\right)_+
$

\\

\bottomrule
\end{tabular}

\caption{Tightness in function of Sharpness, and MAD rin function of Sharpness for selected forecast distributions.}
\label{tab:inverse_sharpness_mad}

\end{table}

\end{landscape}





\bibliographystyle{unsrt}
\bibliography{references}

\end{document}